\newtheorem{theorem}{Theorem}
\newtheorem{definition}{Definition}
\DeclareMathOperator{\argmax}{arg\,max}
\newcommand{\markChanges}[1]{#1}
\begin{document}
%
\title{Dynamic QoS-Aware Traffic Planning for Time-Triggered Flows in the Real-time Data Plane}
%
%
%

\author{Jonathan Falk, Heiko Geppert, Frank Dürr, Sukanya Bhowmik, Kurt Rothermel
\thanks{This research was funded by the Deutsche Forschungsgemeinschaft (DFG, German Research Foundation) -- 285825138.}
\newline Institute for Parallel and Distributed Systems, University of Stuttgart, Germany\newline%
$\left\{\text{firstname.lastname}\right\}$@ipvs.uni-stuttgart.de%
}

%
%

\markboth{IEEE Transactions On Network and Service Management,~Vol.~19, No.~2, JUNE 2022
}%
{Dynamic QoS-Aware Traffic Planning}
%

\IEEEpubid{1932-4537~\copyright~2022 IEEE. Personal use is permitted, but republication/redistribution requires IEEE permission.
}


\maketitle

\begin{abstract}
	Many networked applications, e.g., in the domain of cyber-physical systems, require strict service guarantees for time-triggered traffic flows, usually in the form of jitter and latency bounds.
	It is a notoriously hard problem to compute a network-wide traffic plan, i.e., a set of routes and transmission schedules, that satisfies these requirements, and dynamic changes in the flow set add even more challenges.
	Existing traffic-planning methods are ill-suited for dynamic scenarios because they either suffer from high computational cost, can result in low network utilization, or provide no explicit guarantees when transitioning to a new traffic plan that incorporates new flows.
	
	Therefore, we present a novel approach for dynamic traffic planning of time-triggered flows.
	Our conflict-graph-based modeling of the traffic planning problem allows for the  reconfiguration of active flows to increase the network utilization, while also providing per-flow QoS guarantees during the transition to the new traffic plan.
	Additionally, we introduce a novel heuristic for computing the new traffic plans.
	Evaluations of our prototypical implementation show that we can efficiently compute new traffic plans in scenarios with hundreds of active flows for a wide range of settings.
\end{abstract}

\begin{IEEEkeywords}
traffic planning, QoS, reconfiguration, time-triggered
\end{IEEEkeywords}

%
\IEEEpeerreviewmaketitle

\section{Introduction}
Real-time communication with guarantees on delay and jitter has become an essential requirement for implementing time-sensitive networked systems from application domains such as manufacturing (Industrial Internet of Things), the automotive domain, or any kind of cyber-physical system, where physical processes are controlled through networked sensors, actuators, and controllers.
For a long time, networking technologies for such time-sensitive systems have followed another trajectory than \enquote{traditional} computer networks resulting in field-buses and other distinct, commercial networking standards that were tailored to the specific real-time requirements for so-called operational technology (OT).
However, to save cost and to benefit from the fast evolution of wide-spread networking technologies such as Ethernet, we observe a strong trend to merge these two networking domains.
This trend manifests itself in activities of standardization bodies, namely, the IETF Deterministic Networking (DetNet)~\cite{rfc8557} initiative or IEEE Time-Sensitive Networking (TSN).

For example, TSN has brought several extensions that equip IEEE Std. 802.3 (Ethernet) LANs with mechanisms allowing for deterministic bounds on network delay and jitter.
In particular, the Time-Aware Shaper (TAS)  (cf. IEEE Std. 802.1Q~\cite{ieeecomputersocietyIEEEStandardLocal2018}) provides a mechanism to schedule the release times of frames in the bridges in the network.
In combination with explicit routing, this allows to enforce a network-wide traffic plan consisting of routes and schedules for each individual flow that provides hard real-time guarantees for time-triggered flows.
Here, we use the term time-triggered flow as the umbrella term for isochronous traffic or cyclic-synchronous traffic, which are traffic types originating in field-level OT with cycle times and jitter and latency bounds in the range of micro-seconds.

\IEEEpubidadjcol

Computing such a traffic plan is well-known to be a (NP-)hard problem, for instance, related to the Job Shop Scheduling Problem \cite{durr_no-wait_2016-1}.
So far, the traffic planning problem has predominantly been addressed for static, a priori known sets of flows with a focus on scalability to support larger scenarios.

However, a second general trend which is postulated both in industry white-papers~\cite{noauthor_hello_2016,noauthor_hello_2016-1} and research articles~\cite{raagaard_runtime_2017-1,danielisRealTimeCapable2018,prinz_dynamic_2018} is a general movement away from static environments towards dynamic reconfigurable scenarios, e.g., when attaching or physically moving a sensor or machine in a smart factory.
The term \emph{plug-and-produce} is frequently used in the context of Industry 4.0 to describe the need for flexible production facilities where devices can be added quickly and configured automatically, which also includes the adaption of network schedules and routes to integrate new devices into the network.

Software-defined Networking (SDN) introduced the fundamental concept of a logically centralized network controller with a global view onto the network, in particular, to dynamically adapt routes.
TSN similarly provides the concept of a centralized network controller (CNC) and associated management abstractions to automate network configurations including schedules.
Thus, the basic primitives for dynamically changing routes and schedules of nodes in a softwarized time-sensitive network exist.
However, algorithms for \emph{traffic planning} that such a controller has to provide are still subject to research.

This applies especially to traffic planning that supports dynamic scenarios where flows can be added dynamically---which is more than just \enquote{static planning done fast}.
In addition to the general challenges of network updates~\cite{reitblatt_abstractions_2012}, we also have to consider the quality-of-service (QoS) requirements of the time-triggered flows during the transitions between traffic plans.
For example, if we naively execute static traffic planning again from scratch each time new flows are to be added, this provides no control over the service degradation experienced by an individual flow, and usually we cannot even guarantee that the new plan still includes all former active flows.
Worse yet, we might even have to suspend the emission of new packets for some time to ensure that there is no interference induced by old packets---resulting in a \enquote{stop-and-go}-reconfiguration, or, if technically possible, we have to drop old packets.
All of this boils down to the insight that dynamic traffic planning requires taking the current traffic flows into account when computing a new traffic plan.

In early approaches~\cite{raagaard_runtime_2017-1,nayakIncrementalFlowScheduling2018} this is addressed by what amounts to defensive planning:
Defensive planning does not change the configuration of active flows in order to add new flows.
In other words, defensive planning takes the configuration of active flows as fixed and uses the remaining network resources for routing and scheduling new flows.
While defensive planning never affects the quality of service (QoS) of active flows, it might utilize network resources in a sub-optimal way.
For example, defensive planning might result in scenarios where new flows have to be rejected simply because the active flows have \enquote{fragmented} the remaining network resources, and we cannot \enquote{correct} the past scheduling and routing decisions.

Obviously, this issue can be mitigated by what amounts to offensive planning where (some) active flows can be reconfigured when adding new flows.
However, this introduces additional challenges:
While offensive planning allows for better utilization of network resources, the transitory effects of reconfigurations possibly introduce short-term QoS degradation~\cite{li_enhanced_2019}, and the transition phase has to be temporally bounded to guarantee deterministic communication in the long run~\cite{balasubramanian_sdn_2021}.
This means, offensive planning makes only sense if we can control both, the degree and duration of a QoS degradation, and it requires applications that can tolerate controlled fluctuations of QoS.
Such applications, for instance, correspond to the cyclic-synchronous traffic pattern where the emphasis is on latency and some jitter is acceptable, cf. the upcoming IEEE/IEC 60802 TSN profiles~\cite{noauthor_iecieee_nodate}.

In the literature, reconfigurations of time-triggered flows are addressed in~\cite{li_enhanced_2019,pang_flow_2021}.
The approach in~\cite{li_enhanced_2019} considers the problem how to execute an update from one given, current traffic plan to another given, new traffic plan such that the number of lost packets is minimized.
This is later extended with a method~\cite{pang_flow_2021} for the computation of the new traffic plan, more specifically, a new schedule.
The problem of service degradation is only addressed on the level of packet losses, i.e., the computation of a new traffic plan is subject to constraints which ensure that a loss-free packet update is possible.

In contrast, our traffic planning approach considers both, the temporal (scheduling) and spatial (routing) aspects.
We also address the challenges of offensive planning for cyclic-synchronous traffic flows by providing a method to satisfy per-flow QoS degradation requirements in terms of packet timings (not just no packet drops).
In detail, the contributions of this paper are twofold:
\begin{itemize}
	\item
    We present a novel approach for dynamic traffic planning for isochronous and/or cyclic-synchronous traffic flows using the zero-queuing principle~\cite{durr_no-wait_2016-1,perryFastpassCentralizedZeroqueue2014}.
	Our approach supports offensive traffic planning where the transitions from the old traffic plan to the new traffic plan, i.e., a traffic plan update, do not require artificial pauses of sender nodes or dropping packets, and our approach supports limiting the QoS degradation during the update according to per-flow user-specified bounds.
    This allows for a mixed-operation mode where a subset of active flows (e.g., of jitter-resilient applications) can be reconfigured, while the remaining active flows (e.g., of jitter-sensitive applications) are exempt from reconfigurations.
    
	\item
	We use a conflict-graph-based method \cite{falkTimeTriggeredTrafficPlanning2020} for the traffic planning.
	Conflict-graph-based modeling is well-suited for dynamic traffic planning because the conflict graph embeds a large portion of the knowledge about the solution space from the previous traffic plan, and thus reduces the effort required to compute the new traffic plan.
	Computing a new traffic plan requires to solve a variant of the independent colorful vertex set problem with weighted colors.
    We present a novel heuristic for the underlying optimization problem which we have to solve to obtain a new traffic plan.
	Our evaluations show that our novel heuristic can solve this problem for hundreds of flows in a fraction of the time required by a state-of-the-art optimizer.
\end{itemize}
Following the related work discussion in Sec.~\ref{sec:related_work}, we introduce our system model, modeling concepts, and problem statement in Sec.~\ref{sec:system_model}-\ref{sec:concepts_problem}.
We describe the traffic planning approach in Sec.~\ref{sec:constructingcg}--\ref{sec:qos}, evaluate our approach in Sec.~\ref{sec:eval} and conclude the paper in Sec.~\ref{sec:conclusion_outlook}.


\section{Related Work}
\label{sec:related_work}

In this paper, we address dynamic traffic planning for isochronous or cyclic-synchronous traffic for packet-switched, meshed networks with hard real-time requirements in terms of end-to-end latency and jitter.
In particular, we focus on time-triggered approaches that rely on explicitly scheduled transmissions throughout the network.
This means, approaches that use a combination of traffic shaping and admission control~\cite{guckDetServNetworkModels2017} are considered out of scope, since they are more suited for jitter-tolerant applications with asynchronous traffic patterns.
Work that is primarily concerned with aspects of interfaces or system architectures~\cite{prinz_dynamic_2018,gutierrezSelfconfigurationIEEE8022017,nasrallah_reconfiguration_2019,gerhard_software-defined_2019} addresses a complimentary problem, and usually relies on traffic planning functionality.


TSN and similar converged networking technologies such as TTEthernet (SAE AS6802), or Time-Sensitive Software Defined Networks (TSSDN) have sparked many traffic planning approaches for scheduling time-triggered isochronous and cyclic-synchronous traffic in static scenarios.
These can be roughly classified into two categories: 1) approaches that rely on generic constraint programming frameworks such as Integer Linear Programming (ILP) \cite{nayak_time-sensitive_2016-1,schweissguth_ilp-based_2017,schweissguth_ilp-based_2017,jonathanfalkExploringPracticalLimitations2018,atallahRoutingSchedulingTimeTriggered2019} or Satisfiability Modulo Theories (SMT) \cite{steiner_evaluation_2010,oliverIEEE8021Qbv2018,steinerTrafficPlanningTimeSensitive2018}, and  2) approaches that use custom heuristics \cite{durr_no-wait_2016-1,pahlevanHeuristicListScheduler2019,syedEfficientOfflineScheduling2019a} or a combination of both \cite{falkTimeTriggeredTrafficPlanning2020}.
Out-of-the-box, most static planning approaches cannot be used for dynamic traffic planning, since---by nature---they do not account for old packets from a previous traffic plan which results in the timing violations during the transition from one traffic plan to another.

In principle, incremental approaches~\cite{raagaard_runtime_2017-1,nayakIncrementalFlowScheduling2018} can be used for static scenarios and defensive planning in dynamic scenarios:
\enquote{Freezing} the configurations of active flows ensures that \enquote{nothing bad} happens when new flows are added, because the configurations of active flows from each previous step remain immutable for the future.
However, defensive traffic planning prohibits adjustments of any past decision that may turn out to have been sub-optimal with respect to schedulability when facing new flows.
This can result in low network utilization if a flow that has been added early on obstructs new flows.

This could be overcome by offensive planning that allows for the reconfiguration of active flows.
Reconfigurations of time-triggered flows are addressed in~\cite{li_enhanced_2019,pang_flow_2021}.
An enhancement for two-phase network updates~\cite{reitblatt_abstractions_2012} with schedule-aware per-flow update times was presented in \cite{li_enhanced_2019}.
The proposed update mechanism aims to reduce packet loss---caused by missed timings due to the reconfiguration---when transitioning from the current traffic plan to a given, new traffic plan.
While~\cite{li_enhanced_2019} considers traffic planning as external step, \cite{pang_flow_2021} presents an ILP-based approach for the computation/modification of schedules which eliminates the possibility of missed timings altogether.
Two algorithms are described in~\cite{pang_flow_2021} that both allow for packet-drop-free reconfigurations of active flows (offensive planning), by either computing the new schedule directly, or modifying an otherwise obtained, given schedule.

In contrast, we consider scheduling \emph{and routing} when computing the new traffic plan.
Our modeling of the problem and our novel heuristic are more efficient than constraint-based approaches because we can reuse most of the state which encodes the solution space from the previous step.
From a performance perspective, our approach therefore is a step towards true \enquote{online}-reconfiguration capabilities in terms of runtimes.

So far, the existing approaches provide service guarantees during a reconfiguration either by prohibiting reconfigurations (defensive planning) or only in terms of packet drops rather than timing guarantees.
In contrast, our approach is more powerful and allows for incorporating application-specific \emph{per-flow} bounds on the (temporary) QoS degradation in terms of jitter and packet reorderings, i.e., our approach enables offensive traffic planning for time-triggered flows with heterogeneous QoS requirements, e.g., wrt. jitter-tolerances.

\section{System Model}
\label{sec:system_model}
We consider traffic flows in packet-switched data networks.
A traffic flow is a sequence of packets that is transmitted from a source node to a destination node through the network.
The network consists of infrastructure nodes (switches, routers), source nodes, destination nodes, and a logically centralized controller.
Infrastructure nodes forward data packets via point-to-point links which connect nodes.
Source node and destination node are the origin and target of the packets of a flow, respectively.
Packets are sent and forwarded according to the global \emph{traffic plan}.
The traffic plan is computed and disseminated by the planner, which is situated at the logically centralized controller and has global view onto the network.

\subsection{Flow Dynamics}
Whenever a set of new flows is to be added to the network, the planner computes a new traffic plan.
Therefore, over time the planner computes a sequence of traffic plans, where a newly computed traffic plan becomes the current plan, which replaces the old one, cf. Fig.~\ref{fig:flow_dynamics}.
A traffic plan, say \(p\), defines for the set of flows admitted to the network when and along which routes the packets of the active flows are transmitted.
\begin{figure}
	\centering
	\includegraphics[clip,width=1\linewidth]{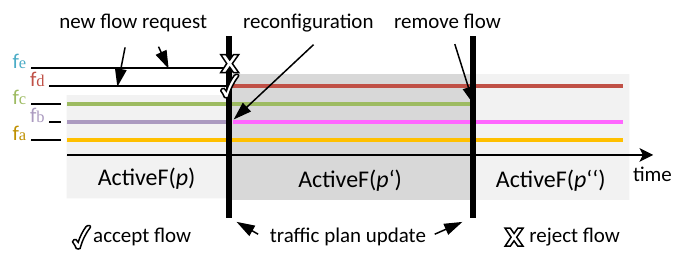}
	\caption{
            The planner computes a sequence of traffic plans.
            During the first traffic plan update, out of the two flow requests (ReqF(\(p'\))=\(\{f_d,f_e\}\)), only \(f_d\) can be admitted (which requires a reconfiguration of \(f_b\)) and \(f_d\) is added to the set of active flows ActiveF(\(p'\)), while \(f_e\) could not be scheduled and is rejected.
            The second update removes \(f_c\), which is not needed by the application anymore.}
	\label{fig:flow_dynamics}
\end{figure}

We will use ActiveF(\(p\)) to denote the set of active, i.e., admitted flows of plan \(p\).
For example, in Fig.~\ref{fig:flow_dynamics}, \(\operatorname{ActiveF}(p)=\{f_a,f_b,f_c\}\), \(\operatorname{ActiveF}(p')=\operatorname{ActiveF}(p) \cup \{f_d\}\), etc.
Active flows can also be removed, e.g., \(f_c\) in Fig.~\ref{fig:flow_dynamics}, which means that the source node stops sending packets and the planner is notified that it can reassign the resources previously reserved for that flow when computing the next traffic plan update.
Removing active flows is trivial from the perspective of the planner, cf. Sec.~\ref{sec:rejects_removals}.
In the following, we therefore concentrate on the case that the planner only attempts to add new flows to the network.

Let ReqF(\(p\)) be the set of flows that the applications request to be added to the network, i.e., ReqF(\(p\)) are the requests received by the planner while \(p\) is valid.
By processing this request, the planner generates a new traffic plan \(p'\) with \(\operatorname{ActiveF}(p')\setminus \operatorname{ActiveF}(p) \subseteq \operatorname{ReqF}(p)\), where \(p\) is the plan preceding \(p'\).
In the following, we denote the current traffic plan by \(p\) (which becomes the old traffic plan after the update), and the new traffic plan by \(p'\).
For example, in Fig.~\ref{fig:flow_dynamics}, ReqF(\(p\))=\(\{f_d, f_e\}\) and ReqF(\(p'\))=\(\emptyset\).

In general, it can happen that a flow in ReqF(\(p\)) may not be admitted (cf. \(f_e\) in Fig.~\ref{fig:flow_dynamics}) due to network resource limitations.
However, once a flow \(f\) has been admitted into the set of active flows, the end-to-end deadline constraints and jitter-bounds, i.e., real-time constraints, are guaranteed until the end of its lifetime.
This explicitly includes all future traffic plans and possible reconfigurations of \(f\).
For example, \(f_b\) in Fig.~\ref{fig:flow_dynamics} is an active flow whose lifetime spans over the two depicted traffic plan updates, the first one resulting in a reconfiguration of \(f_b\): during the whole time, the source node of \(f_b\) can send packets which are guaranteed to arrive at the destination node in time.

We focus specifically on the problem of \emph{computing} the traffic plan.
However, ultimately our traffic planning approach is but one building block of the controller, which likely will need to incorporate additional application-specific logic and interfaces.
For instance, the event that triggers the computation of a new traffic plan might depend on the application:
The traffic plan update could be initiated by applications issuing a flow request or a batch of flow requests.
Alternatively, the planner could collect incoming flow requests and decide itself when to compute a new traffic plan, e.g., periodically, or when the number of flow requests reaches a certain threshold.
Consequently, we do not give any guarantees on the response time for flow requests.
In other words, our goal is a \emph{real-time data plane}, not a real-time  control plane, although we strive for a fast processing of requests through efficient planning.
We will show later that we can parameterize our approach with regard to the trade-off between runtime and schedulability.

We assume that we do not have any prior knowledge as to how the flow set changes over time, else we could pre-compute a sequence of traffic plans.
In the following, the planner processes a flow request only once, i.e., either admits the new flow, or rejects it, but our approach allows for implementing other strategies, for example, retries of flow requests, too.


\subsection{Time-Triggered Traffic Flow}
\label{sec:deterministic_flows}
The notion of a traffic flow represents a directed communication channel from one source node to one destination node.
In the following, we focus on time-triggered flows that carry time-sensitive, important information requiring deterministic QoS and no packet drops under normal operation conditions, i.e., in absence of external failures such as power-loss, physical disturbances, etc.

Starting from the time of its activation, the source node of an active flow emits one packet per transmission cycle.
A transmission cycle is an interval of length \(t_\text{cycle}\).
Transmission cycles start at \(k\cdot t_\text{cycle} +t_0, k \in \mathbb{N}\) where \(t_0\) is a common reference point in time.
Source, destination, packet-size, and \(t_\text{cycle}\), and end-to-end delay are flow \emph{parameters} and do not change during a flow's lifetime.

When an application requests a flow, it specifies the parameters in the request for the planner.
Conversely, the planner assigns a \emph{route} which connects source node and destination node, and the \emph{phase} \(\phi\) to each active flow when computing the traffic plan.
The value of \(\phi\) is used by the source node to schedules its packet transmissions at \(k\cdot t_\text{cycle} + \phi +t_0, k \in \mathbb{N}\).
The valid range of \(\phi\) is restricted to \([0,t_\text{cycle}-t_\text{trans}]\) with \(t_\text{trans}=\frac{\text{packet size}}{\text{bandwidth}}\) being the transmission delay.
Thereby, the transmission of a packet on the outgoing link at a source node is confined within the transmission cycle boundaries.
Once the source node has sent a packet, the packet is forwarded by infrastructure nodes to the destination node.
Unless ambiguous, we omit the flow indices of flow properties, i.e., we use \(t_\text{cycle}\) instead of \(t_\text{cycle}(f)\).


\subsubsection{Flow Configuration}
A flow configuration is one assignment of values to the phase and route of a flow.
Since a time-continuous \(\phi\) yields infinitely many flow configurations, we make some practical assumptions:
As proposed in~\cite{falkTimeTriggeredTrafficPlanning2020}, we use discrete time with appropriately mapped granularity, e.g., with \SI{1}{\micro\second} resolution.

Secondly, we also restrict the routing options for each flow to a set of candidate paths.
\markChanges{
Each candidate path must be loop-free.
Additionally, we enforce the end-to-end delay constraint of a flow by capping the length of its candidate paths.
In Sec.~\ref{sec:zero-queuing}, we discuss how zero-queuing provides a simple mapping of the end-to-end delay bound \(t_\text{e2e}\) to the path length.
}
Depending on the scenario, e.g., a highly meshed network and \enquote{large} end-to-end delay bounds, we might still end up with impractical many candidate paths which are \enquote{short enough} to fall below the end-to-end delay induced path length limitation.
Therefore, for each flow, an absolute upper-bound on the number of candidate paths (\(n_\text{path}\)) can be specified to further restrict the routing options.

Upon receiving a request to add a flow, the planner then \emph{once} computes a set of up to \(n_\text{path}\) candidate paths for that flow, and assigns a path identifier \(\pi\) to each candidate path of the flow.
In our case, the planner uses Yen's k-shortest path algorithm~\cite{yenFindingShortestLoopless1971} to compute the candidate paths.

Hence, we can define a flow configuration with a tuple of integers from a finite set.
\begin{definition}[Flow Configuration]
	The tuple \((f,\phi,\pi)\) represents a flow configuration.
\end{definition}

Remember, offensive planning allows for the reconfiguration of active flows while a new plan is established.
Consequently, a flow may have different configurations over time.
The notion config(\(f,p\)) identifies the configuration of flow \(f\) while \(p\) is valid.
If the planner decides to reconfigure \(f\) in a succeeding plan \(p'\), then \(\operatorname{config}(f,p) \neq \operatorname{config}(f,p')\).
We say that each reconfiguration of a flow generates a new version of this flow, where each version exists as long as the corresponding plan is valid.
To be able to distinguish the various flow versions, we will use the notion \(\langle f,p \rangle\), where flow \(f\) is associated with plan \(p\), i.e., config(\(f,p\)) is the configuration of \(\langle f,p \rangle\).

\subsection{Node Capabilities}
\label{sec:netnode_capabilities}
We consider infrastructure nodes with store-and-forward behavior and FIFO queuing semantics per output port.
Infrastructure nodes provide facilities to forward packets along \emph{explicitly} defined routes, e.g., via SDN flow tables, VLAN tagging, explicit path control in IEEE Ethernet networks~\cite{isoISOIECIEEE2017}, etc.

Source nodes support time-triggered packet scheduling, i.e., it is possible to explicitly specify the time of a packet transmission.
Such functionality can be implemented in hardware, e.g., the LaunchTime-feature in  commodity NICs such as Intel's I210 controllers~\cite{vikramdadwalintelcorporationAdoptingTimeSensitiveNetworking}, or in software, e.g., TAPRIO or Earliest TxTime First (ETF) queuing discipline in Linux.
Consequently, all nodes need synchronized clocks, e.g., using protocols such as Precision Time Protocol (PTP).
Support for multiple traffic classes, and time-triggered packet scheduling (e.g., TAS, TTEthernet, etc.) in the infrastructure nodes can be used to protect the scheduled transmission windows from cross-traffic due to unscheduled flows.
\markChanges{
With time-triggered scheduling support in infrastructure nodes, unscheduled traffic, e.g., best-effort traffic, can be forwarded \enquote{outside} of these scheduled transmission windows.
In practice, this requires additional guard bands to ensure that the transmission of non-preemptible best-effort packets (or framelets) cannot continue into the transmission windows for time-triggered traffic.\footnote{
In the simplest case, we can create such guard bands by blocking unscheduled traffic for an interval of the length of MTU transmission before a transmission window for scheduled traffic.}
}
The schedules and reserved transmission windows for mechanisms such as the TAS can effectively be derived from the phase \(\phi\) of the corresponding configuration~\cite{falkTimeTriggeredTrafficPlanning2020}.

Before a node sends or forwards packets of a particular flow \(f\), the controller propagates the necessary flow information corresponding to config(\(f,p\)) to the respective nodes in the network.
What this entails is technology-specific, but includes routing entries and packet transmission schedules.
The flow information in the nodes, in particular routes and schedules, can be updated at runtime by the controller.
Flow information updates shall not affect in-flight packets to prevent network update issues such as black-holing.
This means, infrastructure nodes temporarily may have multiple sets of flow information for each version of a reconfigured flow and deliver each packet according to the flow configuration that was used by the source node when sending the packet.
This can be achieved, e.g., by tagging packets with additional metadata, address schemes, etc.~\cite{reitblatt_abstractions_2012}

The controller lazily purges obsolete flow information sets from the nodes once all old packets have been received by the destination nodes.
In this context, obsolete flow information sets refers to flows that either have been removed from ActiveF(\(p'\)), or have been reconfigured.
In the latter case, while the flow itself remains active, the flow information sets for old flow versions \(\langle f,p \rangle\) are not used anymore.
Practically, to perform this \enquote{garbage collection}, the controller has to track the flow information sets associated with each traffic plan, and then has to delete the corresponding routing entries, clear now unused scheduling entries, etc.

\markChanges{
In principle, an out-of-band mechanism (e.g., via an additional, dedicated management network) may be employed to perform the modifications of the flow information sets in the nodes.
However, we expect it to be more realistic to modify flow information at the network nodes using logical, in-band control channels, that run over the same network as the application data.
In Sec.~\ref{sec:installing_plan}, where we detail the installation of a traffic plan, we discuss the requirements of such control channels, and sketch how to implement them in-band.
}

We consider a network where all links operate with the same transmission speed.
We denote the processing delay of an infrastructure node by \(t_\text{proc}\) and the propagation delay on a link by \(t_\text{prop}\).
These values relate to hardware properties (switching fabric, cable length, etc.) and are therefore considered constants.
While it is not necessary for our approach that all nodes and links induce the same processing delay and propagation delay, respectively, we omit node and link identifiers for \(t_\text{proc}\) and \(t_\text{prop}\) in the following, i.e., assume they have the same values network-wide.

\subsection{Zero-Queuing}
\label{sec:zero-queuing}
Our approach uses the zero-queuing principle~\cite{durr_no-wait_2016-1,perryFastpassCentralizedZeroqueue2014} where packets must not be buffered in the network.
Instead, packets always enter empty queues at each network element and are immediately forwarded to the next hop.
Zero-queuing can be achieved by a global coordination  (here: in the form of the traffic plan) of packet transmissions.
If two flows violate the zero-queuing constraint, we say they \emph{interfere}.

Fig.~\ref{fig:deterministic_packets} depicts a packet sequence of a time-triggered flow at different points in the network with zero-queuing.
The source node sends one packet per cycle on its outgoing link starting at \(t_0 + \phi + k \cdot t_{cycle}, k \in \mathbb{N}\).
At each hop, packets accumulate only the (inevitable) delay which consists of the time between receiving a packet and completing the forwarding of the packet on the next link.
The per-hop delay \(t_\text{perhop}=t_\text{trans} + t_\text{prop}+t_\text{proc}\) can be computed from \(t_\text{trans}\) and the network properties \(t_\text{prop}\) and \(t_\text{proc}\).
Thus, zero-queuing enables transmissions with bounded end-to-end delay 
\(
t_\text{e2e} =  t_\text{trans} + t_\text{prop} + l \cdot t_\text{perhop} + t_\text{src} + t_\text{dst}
\)
that depends on the number of hops \(l\) in-between source and destination (i.e., \(l\) is the number of infrastructure nodes on the route), and processing delays of source and destination node (denoted by \(t_\text{src}\) and \(t_\text{dst}\)), respectively.
\begin{figure}
	\centering
	\includegraphics[
	trim={.6cm .6cm .6cm .6cm},clip,
	width=0.9\linewidth]{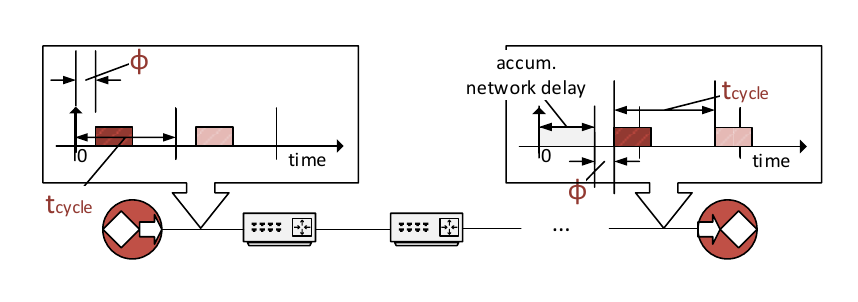}
	\caption{Zero-queuing: Packets accumulate deterministic delay on the route to the destination.}
	\label{fig:deterministic_packets}
\end{figure}

\section{Concepts and Problem Statement}
\label{sec:concepts_problem}
We use a conflict-graph-based approach to compute the new traffic plan.
Conflict-graph-based modeling for static traffic planning was previously introduced by us in~\cite{falkTimeTriggeredTrafficPlanning2020}.
From a birds-eye view, conflict-graph-based modeling represents flow configurations and their relations by vertices and edges in a so-called conflict graph.
A traffic plan can then be obtained from the conflict graph by searching for a set of conflict-free configurations.

Before we formally state our problem, we explain how these conflict graph concepts relate to our planning problem and extend these concepts for dynamic scenarios.

\subsection{Configuration Conflict}
Intuitively, we have a conflict between two configurations if packets would have to be transmitted \emph{simultaneously} on the same link to arrive at their respective destination without queuing.
Since packet transmission are serialized at each output port, we define configuration conflicts as violations of the zero-queuing constraint.
\begin{definition}[Conflict]
	Let \(c_1\) and \(c_2\) be two different configurations.
	The two configurations are in conflict if any packet sent with \(c_1\) would be buffered due to a packet sent with \(c_2\), or vice versa.
\end{definition}
\begin{figure}
	\centering
	\includegraphics[
	trim={.6cm .6cm .6cm .6cm},
    clip,
	width=.9\linewidth
	]{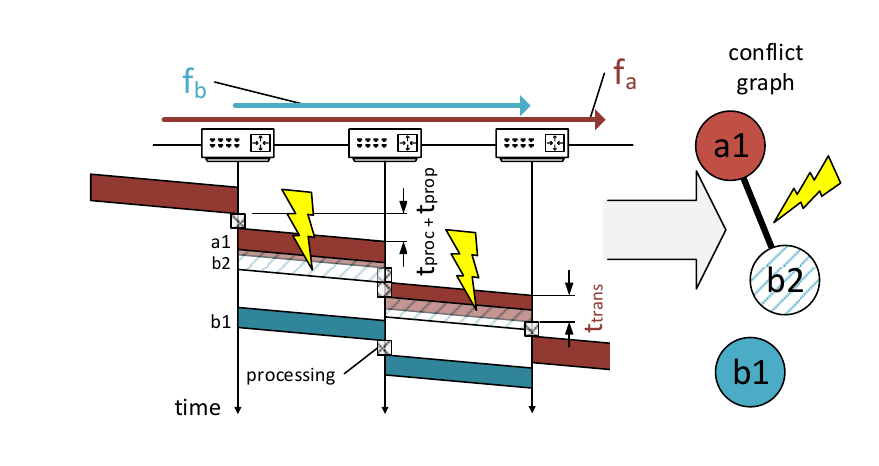}
	\caption{
        \markChanges{
            Example: The configuration \(a1\) for \(f_a\) conflicts with the configuration \(b2\) of \(f_b\) because \(f_b\)'s packets are scheduled too early, resulting in buffering.
            Both flows traverse all links without buffering if we select \(b1\) (which does not conflict with \(a1\)) for \(f_b\) instead.
        }
    }
	\label{fig:zeroqueuing}
\end{figure}
This is illustrated in Fig.~\ref{fig:zeroqueuing}.
There is a conflict between configuration \(a1\) for flow \(f_a\) and configuration \(b2\) for flow \(f_b\) since the transmission at the source of \(f_b\) is scheduled too early, and packets sent with \(b_2\) would have to be buffered until the transmission of packets from \(f_a\) is finished---which violates the zero-queuing principle.
In contrast, configuration \(a1\) and \(b1\) are conflict-free due to the increased phase for packets of \(f_b\).

The zero-queuing constraint allows us to conflict-check pairs of configurations independently in parallel.
To compute whether two configurations conflict, we have to check if the scheduled packet transmissions are always temporally isolated on common links.
Due to the cyclic property of the packet transmissions, we only check an interval of the length of a hyper-cycle of the two flows, i.e., the least common multiple of their \(t_\text{cycle}\)-values.
However, packets may take longer than the (hyper-)cycle to reach the destination.
Therefore, we use modulo arithmetic to \enquote{fold} transmissions crossing the cycle bounds back to the start of the interval.

The effort to check whether two configurations conflict in principle scales with the length of the candidate paths and the relative least common multiple (lcm) of the cycle times of the two flows (not the length of the global hyper-cycle which is the lcm of all cycle times).
Since for each pair of flows \(f_a,f_b\) we expect to encounter the same candidate-path combination \(\pi_a,\pi_b\) for many phase-values \(\phi_a,\phi_b\), we can use caching to store the common links for the candidate-path combinations.
In practice, the effort therefore depends on the shared sub-paths of the candidate paths and the lcm of the cycle times.

\subsection{Conflict Graph}
We encode the relations and constraints between the configurations in a conflict graph.
A flow configuration is represented by a vertex in the conflict graph, and conflicts between configurations of \emph{different} flows are represented by edges in the conflict graph.
For example, in Fig.~\ref{fig:zeroqueuing} there is an edge between configuration \textit{a1} and configuration \textit{b2}.

A new version of the conflict graph is generated by modifying the old one whenever a new traffic plan is to be computed.
As will be shown below, a new version of the conflict graph is generated by adding/removing configurations to the previous version and/or locking/unlocking configurations.
In the following, \(\mathbf{G}(p)\) identifies the version of the conflict graph that is associated with plan \(p\), i.e., \(\mathbf{G}(p)\) has been used to compute \(p\).

The conflict graph includes for each flow a set of potential configurations, which we will call the flow's candidate configurations, or candidates for short.
We use cand(\(f,p\)) to denote the candidate set of flow \(f\) in \(\mathbf{G}(p)\).
Given that---depending, e.g., on the granularity of the time-discretization---there potentially might exist millions or more configurations for each flow, we discuss our strategy for a non-exhaustive conflict graph construction in Sec.~\ref{sec:constructingcg:adding_candidates} where cand(\(f,p\)) need not contain every single existing configurations of \(f\).

We can interpret \(\mathbf{G}(p)\) as a colored graph.
\begin{definition}[Conflict Graph]
	The conflict graph \(\mathbf{G}(p)\) is an undirected, vertex-colored graph
	where all configuration vertices in cand(\(f,p\)) are colored by \(f\).
	Two configuration vertices in \(\mathbf{G}(p)\) are connected by an undirected edge if and only if they belong to different flows and there is a conflict between the two corresponding configurations.
\end{definition}

To insert a configuration \(c\) for flow \(f\) to \(\mathbf{G}(p)\), we add \(c\) to the vertex set of \(\mathbf{G}(p)\) and add an edge to all conflicting candidates of flows other than \(f\) in \(\mathbf{G}(p)\).
To remove a single configuration \(c\) from \(\mathbf{G}(p)\), we delete all edges between \(c\) and other configurations in \(\mathbf{G}(p)\), and remove \(c\) from the vertex set of \(\mathbf{G}(p)\).

\subsection{Global Traffic Plan}
From the perspective of the planning problem, a traffic plan \(p\) provides a phase and route for each flow \(f\) in ActiveF(\(p\)) such that all packet transmissions are isolated temporally and/or spatially, and thus satisfy the zero-queuing principle.
In other words: 
\begin{definition}[Traffic Plan]
	A traffic plan \(p\) is a set of conflict-free configurations for ActiveF(\(p\)) in \(\mathbf{G}(p)\) which contains exactly one configuration for each flow in ActiveF(\(p\)).
\end{definition}
We say that traffic plan \(p\) admits flow \(f\) if \(p\) contains a configuration for \(f\), i.e.,
\begin{equation}
	\operatorname{admits}(f,p)= \begin{cases}
		1: \text{if} \ \exists \ \text{configuration for \(f\) \(\in\) \(p\)} \\
		0: \text{if} \ \nexists \ \text{configuration for \(f\) \(\in\) \(p\)}
	\end{cases}.
\end{equation}

\begin{definition}[Independent Set]
	Denote by \(\mathcal{V}\) the vertices, and denote by \(\mathcal{E}\) the edges in \(\mathbf{G}(p)\), respectively.
	\(\mathcal{C} \subseteq \mathcal{V}\) is an independent subset of vertices in the conflict graph \(\mathbf{G}(p)\) iff \(\forall u,v \in \mathcal{C} : (u,v) \notin \mathcal{E} \).
\end{definition}

With the correspondence vertex \(\leftrightarrow\) configuration and edge \(\leftrightarrow\) conflict, it is easy to see that a traffic plan corresponds to a set of independent vertices in the conflict graph~\cite{falkTimeTriggeredTrafficPlanning2020}.

Note that \(\mathbf{G}(p)\) possibly contains many sets of different conflict-free configurations.
While it is intuitively clear that we prefer to find a set of conflict-free configurations that admits all flows, this may not always be possible.
Therefore, we specify the objective that guides the search for the configuration in Sec.~\ref{sec:problem_statement}.

\subsection{Transition Interference}
\begin{figure}
    \centering
    \includegraphics[
    trim={.6cm .6cm .6cm .6cm},clip,
    width=0.9\linewidth
    ]{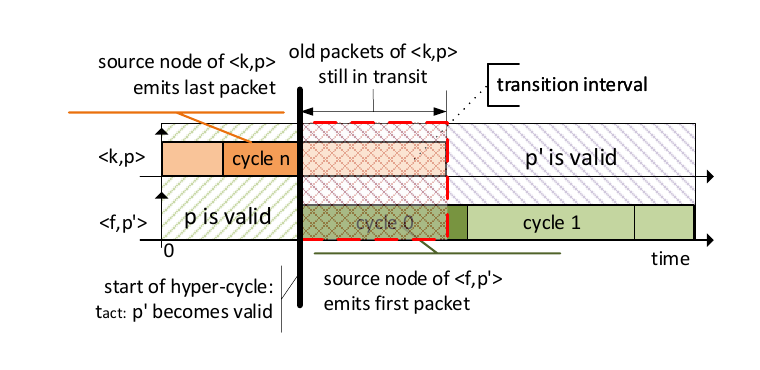}
    \caption{
        \markChanges{
            The possibility of transition interference between old packets of flow \(k\) from plan \(p\) and new packets of flow \(f\) from \(p'\) is limited to the \emph{transition interval}.
        }
    }
    \label{fig:reconfigurationconflict}
\end{figure}

Assume that we have a new traffic plan \(p'\) and want to replace the traffic plan \(p\) with \(p'\).
We call the time a new traffic plan \(p'\) becomes valid its activation time \(t_\text{act}\).
To ensure that no source node is in the middle of a packet transmission at the activation time of \(p'\), and since the flows in Active(\(p\)) may have different cycle times, we choose the start of a fresh hyper-cycle of all active flows in ActiveF(\(p\)) as activation time, i.e., \(t_\text{act} = t_0 + k \cdot \operatorname{lcm}_f(t_\text{cycle}(f))\).
From the perspective of an active flow it is perfectly fine for \(t_\text{act}\) to be very far---many transmission cycles---in the future, because all flows in ActiveF(\(p\)) remain active with their established configurations from \(p\) until \(t_\text{act}\).

Then, at \(t_\text{act}\) the new traffic plan becomes valid.
This means, from \(t_\text{act}\) on, we would like \emph{all} the packets of each flow \(f\) in ActiveF(\(p'\)) to be transmitted according to \(\operatorname{config}(f,p')=(f,\phi',\pi')\).
If all packets sent according to the old plan \(p\) are delivered to the destination nodes before \(t_\text{act}\), this would simplify our problem a lot.
It implies that the network would be empty at \(t_\text{act}\) which would allow us to compute each new traffic plan \(p'\) independently of \(p\).
In other words, under these circumstances computing a new traffic plan is equivalent to solving a new static traffic planning instance.

However, this simplified \enquote{solution} is inadequate because it severely restricts the network utilization.
Ultimately, for each packet to arrive within the transmission cycle it was sent by the source node, \(\phi\) would need to be less than \(t_\text{cycle}-t_\text{e2e}-t_\text{trans}\).
This means, the transmission frequency and network utilization are inherently coupled to the network size.

Therefore, we provide a solution for the general case where there can be remaining yet-to-be-delivered packets from \(\langle k,p\rangle\), \(k \in \operatorname{ActiveF}(p)\) in the network when the new traffic plan \(p'\) becomes valid.
In the general case, we have to account for the possibility of \emph{transition interference}, i.e., flow interference between the remaining old packets from \(\langle k,p\rangle\), \(k \in \operatorname{ActiveF}(p)\) and the first packets from \(\langle f,p'\rangle\), \(f \in \operatorname{ActiveF}(p')\), when computing the new traffic plan \(p'\).

\begin{definition}[Transition Interference]
	Let \(p\) denote the traffic plan valid before \(t_\text{act}\) and let \(p'\) denote the new traffic plan with activation time \(t_\text{act}\).
	Transition interference is defined to be between \(\langle k, p\rangle\), \(k \in ActiveF(p)\) and \(\langle f, p' \rangle\), \(f \in ActiveF(p')\) if any packet sent by the source node of \(\langle k, p\rangle\) prior to \(t_\text{act}\) is buffered due to a packet sent by the source node of \(\langle f, p' \rangle\) from \(t_\text{act}\) on, or vice versa.
\end{definition}
We can algorithmically check whether transition interference occurs between \(\langle k, p\rangle\) and \(\langle f, p' \rangle\):
If the route of \(\langle k, p\rangle\) and the route of \(\langle f, p'\rangle\) have common links, and any packet from \(\langle k, p\rangle\) sent before \(t_\text{act}\) is scheduled for transmission at any point in time when any packet sent by \(\langle f, p'\rangle\) from \(t_\text{act}\) on is scheduled on a common link, transition interference occurs.

Transition interference is restricted to a possible zero-length time interval, the \emph{transition interval} \(t_\text{transit}\), which starts at \(t_\text{act}\) and is upper-bounded by the end-to-end delay of \(\langle k, p\rangle\), cf. Fig~\ref{fig:reconfigurationconflict}.
Now we can also concretize the \enquote{purging} of obsolete flow information sets (cf. Sec.~\ref{sec:netnode_capabilities}):
It is easy to see that once the transition intervals of all flows in ActiveF(\(p\)) have expired, the controller can safely delete all unused flow information sets associated with \(p\) from the nodes in the network.

In the following, we only consider the case where transition intervals of different traffic plan updates are mutually exclusive, i.e., we consider traffic plan \(p\) and its immediate successor \(p'\).


\subsection{Problem Statement}
\label{sec:problem_statement}
Considering the many facets of dynamic traffic planning, our focus is how to actually compute the traffic plan and the traffic plan (updates) for an evolving set of real-time traffic flows.

Let \(p\) be the traffic plan for the flows in ActiveF(\(p\)) computed from \(\mathbf{G}(p)\), i.e., the packets of flows in ActiveF(\(p\)) are transmitted according to \(p\).
Now applications request that a set of flows ReqF(\(p\)) is added to the network.
Our goal is to compute a new traffic plan \(p'\).

To compute \(p'\), we have to construct the new conflict graph \(\mathbf{G}(p')\) and search for a set of conflict-free configurations (i.e., independent vertices) in \(\mathbf{G}(p')\).
In our case, we want to ensure that once the planner added a flow to ActiveF(\(p\)), this flow remains active until the application itself indicates the flow can be removed.
This means, the planner shall not unsolicitedly evict any flow from ActiveF(\(p\)).
Consequently, the new traffic plan shall include a configuration for \emph{every} flow in ActiveF(\(p\)) and as many new flows as possible from ReqF(\(p\)).
The different importance of active flows and new flows results in the following objective.
\begin{definition}[Traffic Planning Objective]
	Let \(\mathbf{G} (p')\) be a conflict graph which contains candidates for all flows in \(\operatorname{ActiveF}(p) \cup \operatorname{ReqF}(p)\) with \(\operatorname{ActiveF}(p) \cap \operatorname{ReqF}(p) = \emptyset\).
	We want to find a new traffic plan \(p' \subseteq \bigcup_{f} \operatorname{cand}(f,p')\) that maximizes the objective
	\begin{equation}
        \begin{split}
		\max_{p'} \sum_{f \in \operatorname{ActiveF}(p)} \operatorname{admits}(f,p')\\ +\sum_{f \in \operatorname{ReqF}(p)} \frac{\operatorname{admits}(f,p')}{\left|\operatorname{ActiveF}(p) \cup \operatorname{ReqF}(p)\right|}.
        \end{split}\label{eq:objective_weighted_max_color}
	\end{equation}
\end{definition}
The factor \({1}/{\left|\operatorname{ActiveF}(p) \cup \operatorname{ReqF}(p)\right|}\) in Eq.~\ref{eq:objective_weighted_max_color} discounts the relative importance of flows in \(\operatorname{ReqF}(p)\).
This means, any flow in \(\operatorname{ActiveF}(p)\) is more \enquote{valuable} than all flows \(\operatorname{ReqF}(p)\) combined.
From a graph-theoretic perspective, this is a specific colorful independent vertex set problem where we want to find a set of independent vertices and each color has either unit weight or a weight inversely proportional to the total amount of colors.

\section{Constructing the Conflict Graph}
\label{sec:constructingcg}
Next, we explain how the planner constructs the new conflict graph \(\mathbf{G}(p')\) when processing ReqF(\(p\)).
With offensive planning, the planner has to consider both flow versions \(\langle f,p \rangle\) and \(\langle f,p'\rangle\) for each active flow \(f\), where \(p'\) indicates the new plan replacing \(p\). 

\subsection{Overview}
Remember that after installing \(p'\), packets from both, \(\langle f, p\rangle\) and \(\langle f,p'\rangle\), may be in the network for some time.
Therefore, it is not sufficient to only avoid conflicts between any new flow versions \(\langle f, p' \rangle\) which would suffice in the static case.
Instead, the planner additionally has to ensure that any \(\langle k, p\rangle\) does not interfere with any other flow \(\langle f, p'\rangle\).
This leads to the following requirements:
\begin{description}
	\item[R1] For all \(k,f \in \operatorname{ActiveF}(p')\), \(\langle k,p'\rangle\) does not interfere with \(\langle f,p'\rangle\) for \(k \neq f\), i.e., active flows of the new plan do not interfere.
	\item[R2] For all \(k \in \operatorname{ActiveF}(p)\) and \(f \in \operatorname{ActiveF}(p')\setminus \operatorname{ActiveF}(p)\), \(\langle k,p\rangle\) does not interfere with \(\langle f, p'\rangle\).
	That is, the active flows of the old plan do not interfere with the flows added to the new plan. 
	\item[R3] For all \(k,f \in \operatorname{ActiveF}(p)\),  \(\langle k,p \rangle\) does not interfere with \(\langle f,p'\rangle\).
	That is, the active flows of the old plan do not interfere with their versions in the new plan.
	If \(f=k\), then we consider the old and new version of the same flow, \(\langle f,p\rangle\) and \(\langle f,p'\rangle\).
	Since packets from \(\langle k,p\rangle\) and \(\langle f,p'\rangle\) may populate the network at the same time, we have to ensure that no transition interference occurs.
\end{description}
R1 is a basic requirement, which is sufficient for defensive planning.
Allowing for reconfigurations of active flows, however, we must add R2 and R3 to ensure the zero-queuing constraint. 
In order to fulfill R1 the planner has to make sure that the configurations associated with flows in ActiveF(\(p'\)) are not in conflict.
To this end, the planner first has to \emph{add candidate configurations} for each flow in ReqF(\(p\)) to the conflict graph.
The expanded conflict graph is the basis for solving the planning problem as stated above.

Now let us see, how additionally R2 and R3 can be guaranteed.
Note that the interference of two flows/flow versions implies that their configurations are conflicting, while the opposite is not true.
For example, if we ensure that the packets of one flow or flow version only enter the network after all packets of another flow have already left the network, these two flows do not interfere even if their configurations are conflicting.
Therefore, we can fulfill R2 by \emph{delaying the activation} of the added flows until all packets that belong to flow versions of \(p\) have been delivered:
Assume the transmission of packets of each flow \(\langle k,p\rangle\) with \(k \in \operatorname{ActiveF}(p)\) is stopped at time \(t\).
All packets sent by these flows have left the network by \(t + \tau\), where \(\tau\geq \max_k (t_\text{e2e}(k) - t_\text{cycle}(k))\) is greater or equal to the largest transition interval \(t_\text{transit}\).
If the transmission of each \(\langle f,p'\rangle\) with \(f \in \operatorname{ActiveF}(p')\setminus \operatorname{ActiveF}(p)\) does not start before  \(t + \tau\), then no flow \(\langle k,p\rangle\) interferes with any flow added to \(p'\), even if their configurations conflict.
To put it differently,  we prevent transition interference between flows from \(k \in \operatorname{ActiveF}(p)\) and \(f \in \operatorname{ActiveF}(p')\setminus \operatorname{ActiveF}(p) \subseteq \operatorname{ReqF}(p)\) from ever happening by adjusting the activation time of new flows, i.e., outside of the actual computation of \(p'\).
Simplified speaking: the controller tells the source nodes of the new flows that their actual activation time is not \(t_\text{act}\), but \(t_\text{act} + \lceil \frac{t_\text{transit}}{t_\text{cycle}}\rceil \cdot t_\text{cycle}\) with \(t_\text{transit}\) the duration of the longest transition interval of any flow in ActiveF(\(p\)).
In the example in Fig.~\ref{fig:reconfigurationconflict}, the first \enquote{usable} transmission cycle for \(\langle f,p'\rangle\) is cycle 1.
For the source node of a new flow this has the same effect as if it took a few \SI{}{\micro\second} longer to compute \(p'\).
Therefore, fulfilling R2 requires no special treatment when constructing the conflict graph.

Note that this only applies for transition interference involving flows that are newly added to ActiveF(\(p'\)).
Fulfilling R3 the same way as R2 is definitely not desirable since delaying data packets of an active flow might degrade QoS substantially.
Figuratively speaking, our solution for R2---\enquote{turning on the machines a bit later}---does not work for R3 because \enquote{the machines are already running}.
Therefore, we have to make sure that the planner cannot select configurations for active flows which conflict with the old configurations of these flows.
We achieve this by \emph{locking those configurations} that the planner must not select for the new plan.
Note that the configurations to be locked depend on the old configuration of the active flows.
Since the configuration of an active flow may change from plan to plan, configurations are unlocked each time after the new plan has been computed.

\subsection{Adding Candidate Configurations}
\label{sec:constructingcg:adding_candidates}
Since \(\mathbf{G}(p)\) only contains candidates for flows in ActiveF(\(p\)), the planner has to generate and insert new candidate configurations for all flows in ReqF(\(p\)) to \(\mathbf{G}(p')\).
In principle, the planner could add all candidates for each flow to \(\mathbf{G}(p')\).
However, we know from the static flow planning problem~\cite{falkTimeTriggeredTrafficPlanning2020} that we can find a traffic plan for all flows even if the conflict graph contains just a subset of all possible candidates of each flow.
We exploit this and add only a limited number (upper-bounded by a parameter \(n_\text{ub}\)) of candidate configurations for each flow in ReqF(\(p\)) to the conflict graph.

Due to the nature of the traffic planning problem, it
is often inherently difficult to identify promising candidate configurations a-priori.
In this paper, we therefore use the following heuristic:

\markChanges{
For each flow, the planner has a stateful configuration generator that walks through the \(\phi\)-\(\pi\)-space (cf. Fig.~\ref{fig:configurationgeneration}) and returns a new configuration each time it is invoked.
We can, figuratively speaking, think of a flow's \(\phi\)-\(\pi\)-space as a two-dimensional plane, where a configuration is a discrete point at a coordinate formed by the phase and path identifier of the configuration.
Each invocation of the generator then corresponds to \enquote{moving} to the next point, which, in turn, generates the respective configuration to be added to the conflict-graph.
In this context, the current coordinate is our generator state, and the rule for determining the next coordinate defines the order in which the configurations of a particular flow are added to \(\mathbf{G}(p')\).
}

\markChanges{
In our case, starting with \(\phi=0\), \(\pi=0\), \(\pi\) is incremented until all configurations for the current value of \(\phi\) have been covered (i.e., moving \enquote{upwards} in the \(\pi\)-dimension), before increasing \(\phi\) by \(\Delta \phi\) (i.e., jumping in the \(\phi\)-dimension).
We use for \(\Delta \phi\) the 75-th percentile of the transmission duration of all flows in \(\mathbf{G}(p')\), hence we expect that one to two phase increments by the planner suffice to resolve a possible interference on a single link.
If \(\phi + \Delta \phi\) exceeds the allowed range for \(\phi\), \(\phi\) is reset to the next, lowest uncovered phase-value, cf. Fig~\ref{fig:configurationgeneration}.
}

\markChanges{
In other words, every time we need a new configuration for a particular flow, we invoke its configuration generator.
The configuration generator then returns the next configuration in a sequence of this flow's candidate configurations obtained by sweeping (possibly in multiple passes along the \(\phi\)-range) over the \(\phi\)-\(\pi\)-space.
}

\begin{figure}
	\centering
	\includegraphics[
	clip,
	]{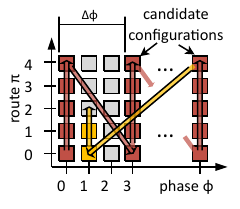}
	\caption{
        \markChanges{
            To obtain a new candidate configuration, the planner \enquote{moves} to a new point in the flow's discrete \(\phi\)-\(\pi\)-space and adds the visited candidate configuration corresponding to the current coordinates to the conflict graph.
            Our heuristic makes the planner walk as indicated by the arrows, i.e., the planner cycles through all candidate paths for the current phase \(\phi\), before jumping to the next phase.
        }
    }
	\label{fig:configurationgeneration}
\end{figure}

While it is obvious that we have to add candidates for flows in ReqF(\(p\)), we can also increase the set of candidates for flows in ActiveF(\(p\)) in \(\mathbf{G}(p')\).
The larger cand(\(f,p'\)), the more of the solution space is covered, and the more alternatives for an active flow \enquote{to make way} for flow in ReqF(\(p\)) exist.
Here, we add more configurations for each flow in ActiveF(\(p\)) when processing ReqF(\(p\)) until the configuration generator has traversed the \(\phi\)-range of a flow for the first time.
After that point, new configurations for active flows are only added if the planner could not admit all flows from the previous request,
i.e., a simple feedback loop controls the growth of the conflict graph.

In principle, we are free to choose how we want to traverse the \(\phi\)-\(\pi\)-space when adding configurations to the conflict graph.
This means, that the number of available configurations in the \(\phi\)-\(\pi\)-space is decoupled from the size of the conflict graph.
For example, if we want a time resolution of \SI{1}{\nano\second} instead of \SI{1}{\micro\second}, we can scale \(\Delta \phi\) accordingly, e.g., instead of advancing three steps in the \(\phi\)-direction, we advance 3000 steps and end up with a similar sized conflict graph.

This decoupling is an important feature because the effort to add a single configuration \(c\) to \(\mathbf{G}(p')\) scales with the number of configurations \(\mathcal{V}\) already in the conflict graph (\(\mathscr{O}(\mathcal{V})\)) since we have to check for each existing configuration already in \(\mathbf{G}(p')\) whether it conflicts with \(c\).
However, we do \emph{not} have to pay the \emph{total} cost of constructing a conflict graph with \(\mathcal{V}\) configurations, which scales with \(\mathscr{O}(\mathcal{V}^2)\).
When constructing \(\mathbf{G}(p')\), we start with \(\mathbf{G}(p)\).
Hence, the cost of adding a configuration effectively gets distributed over the lifetime of a flow.
The planner could also add the additional configurations for flows in ActiveF(\(p\)) when idle, i.e., while no ReqF(\(p\)) has to be processed, to further accelerate the flow-request response time.
Since the lifetime of active flows may differ, the planner should balance the number of candidates per flow in the long-term, e.g., by adding more configurations for younger flows.

\begin{figure}
	\centering
	\includegraphics[trim={.6cm .6cm .6cm .6cm},clip,
	width=0.9\linewidth]{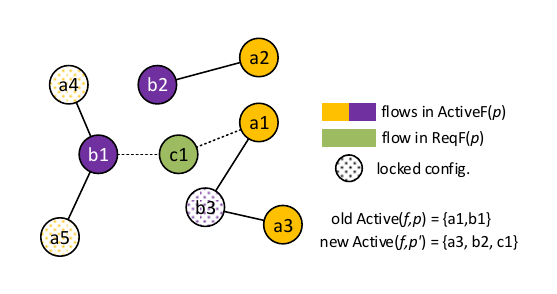}
	\caption{
        \markChanges{
            Example: Configuration \(c1\) for a new flow \(f_c\) conflicts with the active configurations \(a1\) and \(b1\) of \(f_a\) and \(f_b\), respectively.
            To admit \(f_c\), we need to reconfigure \(f_a\) and \(f_b\).
            Due to transition interference, all 1-hop neighbors of active configuration \(a1\) and \(b1\) are locked in \(\mathbf{G}(p')\), so only \(a2\), \(a3\), and \(b2\) remain as options during the reconfiguration.
        }
    }
	\label{fig:interflowconflicts}
\end{figure}

\subsection{Locking Configurations}
\label{sec:constructing:locking}
\(\mathbf{G}(p')\), generated as described in Sec.~\ref{sec:constructingcg:adding_candidates}, may include configurations that cause interference between the old and new versions of active flows, violating R3.
We yet have to make sure that the planner cannot select configurations for active flows that violate R3.
In detail, the following two conditions have to hold:
\begin{enumerate}
	\item For all \(f \in \operatorname{ActiveF}(p)\), there exists no configuration \(c \in \operatorname{cand}(f,p')\) that results in a new version \(\langle f,p'\rangle\) that interferes with \(\langle f, p\rangle\),  i.e., the planner cannot assign a new configuration to \(f\), such that the old and new version of \(f\) interfere.
	\item For all \(f,k \in \operatorname{ActiveF}(p)\), \(k \neq f\) there exists no configuration \(c \in \operatorname{cand}(f, p')\) that results in a new version \(\langle f, p'\rangle\) which interferes with \(\langle k,p\rangle\).
	That is, the planner cannot assign a new configuration to \(f\) which results in interference with an old version of any other active flow \(k\).
\end{enumerate}
We achieve both conditions by locking all those configurations in \(\mathbf{G}(p')\) which would violate one of the conditions if selected by the planner.
That is, informally speaking, we can think of a locked configuration in \(\mathbf{G}(p')\) as being \enquote{hidden} from the planner during the computation of \(p'\).

To fulfill the first condition, for each \(f \in \operatorname{ActiveF}(p)\) we have to visit each \(c \in \operatorname{cand}(f,p')\) in \(\mathbf{G}(p')\) to check whether a new version of \(f\) that uses the visited candidate \(c\) results in transition interference with the old version that uses \(\operatorname{config}(f,p)\).
If transition interference would occur, we have to lock the candidate.
Obviously, this ensures that the planner cannot select a candidate \(c\) for \(\operatorname{config}(f,p')\) which could cause interference with \(\langle f,p\rangle\).
Note that the configuration graph includes only edges for conflicting configurations of different flows.
Therefore, we maintain an additional data structure that provides efficient access to the candidates of a particular flow. 

To fulfill the second condition, for each \(k \in \operatorname{ActiveF}(p)\) we have to visit all 1-hop neighbors of config(\(k,p\)) in the conflict graph, i.e., all candidate configurations of any flow other than \(k\) that conflict with config(\(k,p\)).
If a visited neighbor \(c\) is a configuration of another flow \(f\) in ActiveF(\(p\)), the planner checks for transition interference between a new version of \(f\) with configuration \(c\) and \(\langle k, p \rangle\).
If there would be interference, this configuration \(c\) is locked.
Obviously, this prevents the planner from selecting for any other \(f \in \operatorname{ActiveF}(p)\) a candidate configuration \(c\) for config(\(f,p'\)) which would cause interference.
Note, here we just have to follow the edges of config(\(k,p\)) to access the candidates potentially to be locked, cf. Fig.~\ref{fig:interflowconflicts}.

\subsection{(Permanently) Pinning Flows}
To prevent the planner from ever reconfiguring an active flow \(f\), we can \emph{permanently} remove all configurations of \(f\) other than config(\(f,p\)) from \(\mathbf{G}(p)\)---\emph{pinning} \(f\) to its configuration for its whole lifetime.
Obviously, this ensures config(\(f,p\))=config(\(f,p'\)) for future updates.
In contrast, locking only temporarily excludes some candidates from \(p'\).

\section{Computing the New Traffic Plan}
\label{sec:computing_traffic_plan}
Once the new conflict graph has been constructed by adding additional configurations and locking currently \enquote{forbidden} candidate configurations, we have to compute the traffic plan.
From a graph-theoretic perspective, we can reduce the maximum (colorful) independent vertex set problem to the problem of finding a traffic plan which maximizes Eq.~\ref{eq:objective_weighted_max_color}.
This means, unsurprisingly, that computing a new traffic plan remains an NP-hard problem~\cite{garey_computers_1979-1}.
Therefore, we present a novel heuristic, namely, the \emph{Greedy Flow Heap Heuristic} (GFH).
The name draws from the fact that the objective in Eq.~\ref{eq:objective_weighted_max_color} improves with every additional flow included in the new traffic plan.
Next, we describe the GFH in detail, and explain our strategy to compute the new traffic plan.

\subsection{Greedy Flow Heap Heuristic}
From a birds-eye view, GFH is an iterative greedy approach.
\markChanges{
It splits ActiveF(\(p\)) and ReqF(\(p\)) into flow subsets and successively processes the different subsets in an order that is supposed to maximize the objective function (cf. Eq.~\ref{eq:objective_weighted_max_color}).
For every flow set, GFH iteratively selects the next flow, computes ratings for the flow's configurations in \(\mathbf{G}(p')\), and adds the configuration with the best rating to an intermediary set of conflict-free configurations \(\mathcal{C}\).
}

During the GFH execution, \(\mathcal{C}\) is always an independent vertex set in \(\mathbf{G}(p')\).
For the same flow \(f\) there may be multiple configurations in the final \(\mathcal{C}\) returned by the GFH, but we can only include one of these (we need only one route and phase) in a traffic plan.
In this case, we arbitrarily select one configuration for every flow with multiple configurations in \(\mathcal{C}\) since all configurations in \(\mathcal{C}\) are conflict-free.

Before explaining the greedy strategy and how configurations are rated, we introduce some terminology:

\begin{itemize}
	\item \(\mathcal{C}\) \emph{admits} flow \(f\), if \(\mathcal{C}\) contains at least one candidate configuration for \(f\).
	\item A configuration \(c \in \mathbf{G}(p')\) is \emph{shadowed} if at least one of its neighbors is included in \(\mathcal{C}\).
	Thus, adding a configuration to \(\mathcal{C}\) shadows all its neighbors.
	Shadowed configurations are excluded from being added to \(\mathcal{C}\) since they conflict, i.e., are connected by an edge, with at least one of the configurations in \(\mathcal{C}\).
	\item A configuration is \emph{solitary} if it has no neighbors (i.e., no edge to other configurations) in \(\mathbf{G}(p')\) and thus cannot conflict with any other configuration.
	\item A configuration \(c\) is \emph{eligible} for selection by the GFH algorithm if and only if \(c\) is neither shadowed, nor \(c \in \mathcal{C}\).
\end{itemize}

In each run, the GFH attempts to find a \(\mathcal{C}\) that admits all flows as follows:
First, \(\mathcal{C}\) is initialized by adding all solitary configurations.
Thus, \(\mathcal{C}\) already admits all flows with solitary configurations without reducing the solution space.
Then, configurations for the remaining flows are iteratively added to \(\mathcal{C}\).
We will discuss the configuration selection strategy in Sec.~\ref{sec:algorithm:selecting_colors} and the configuration rating in Sec.~\ref{sec:algorithm:selecting_vertices}, including a small GFH example.

If \(\mathcal{C}\) does not admit all flows after its first run, the GFH has a \emph{re-run} mechanism which attempts to find a \(\mathcal{C}\) that admits more flows by additional \emph{re-runs} with modified starting conditions.
The number of re-runs (default \(n_\text{re-runs}=3\)) can be parameterized.
We discuss the re-run mechanism in Sec.~\ref{sec:algorithm:rerun_mechanism}.
Finally, the complexity of GFH is addressed in Sec.~\ref{sec:algorithm:complexity}.

\subsubsection{Configuration Selection Strategy}
\label{sec:algorithm:selecting_colors}
GFH being a greedy algorithm, the quality of the solution, i.e., how many flows can ultimately be admitted, relies on the strategy we use to add configurations to \(\mathcal{C}\).
GFH uses a hierarchical, iterative strategy, first deciding on the next flow to process, and then selecting a configuration for that flow.
Remember that any active flow contributes more to the objective than all new flows taken together.
To prevent that any configuration of a new flow shadows any active flow's configuration such that an active flow may not be admitted to \(\mathcal{C}\), we thus first process all flows in ActiveF(\(p\)), before searching for a configuration for new flows from ReqF(\(p\)).

The pseudo-code to process a particular (sub-)set of flows, say \(\operatorname{searchF}(p') \subseteq \operatorname{ActiveF}(p) \cup \operatorname{ReqF}(p)\), is given in the method addConfigPerFlow in Alg.~\ref{alg:addConfigPerFlow}.
\begin{algorithm}
    \caption{addConfigPerFlow}
    \label{alg:addConfigPerFlow}
    \Input{flows set \(\operatorname{searchF}(p')\), conflict-free configurations \(\mathcal{C}\)}
    \(\forall f \in \operatorname{searchF}(p'):\) check \(\operatorname{admits}(f,\mathcal{C})\)\;
    \label{alg:addConfigPerFlow:admitCheck}    
    create heap of not admitted flows\;
    \label{alg:addConfigPerFlow:heapCreation}    
    \While{heap $\neq \emptyset$}{
        \(f_\text{min}\) $\gets$ pop \(f_\text{min}\) with least eligible configurations from heap\;
        \(c_\text{sel} \gets \) eligible \(c\) of flow \(f_\text{min}\) with smallest \(\operatorname{computeShadowRating}(c)\)\;
        add \(c_\text{sel}\) to \(\mathcal{C}\), update eligibility of neighbors of \(c_\text{sel}\)\;
        \label{alg:addConfigPerFlow:update}
        update heap (remove completely shadowed flows and reorder)\;
    }
    \Return updated $\mathcal{C}$\;
\end{algorithm}

We start by creating a min-heap that contains all flows from \(\operatorname{searchF}(p')\) not already admitted by \(\mathcal{C}\) (cf. Alg.~\ref{alg:addConfigPerFlow}, line~\ref{alg:addConfigPerFlow:admitCheck}).
The flows in the min-heap are sorted according to the number of remaining eligible configurations, i.e., flows with less eligible configurations are on top of the heap.
Ties are broken with the higher total degree of the configurations in cand(\(f,p'\)), because a high total degree suggests a high chance that all configurations of the flow will soon become ineligible.
The unique flow identifier is used as final tie-breaker to get a deterministic algorithm.

Next, the heap is iteratively processed:
The flow on top of the heap---the one with the least remaining eligible configurations---is removed from the heap, and we add the flow's best-rated (cf. Sec.~\ref{sec:algorithm:selecting_vertices}) configuration to $\mathcal{C}$.
When a configuration is added to \(\mathcal{C}\) the heap is adjusted.
Any completely shadowed flow is removed from the heap and the remaining flows are reordered according to their new eligible configurations count.

\subsubsection{Rating Configurations}
\label{sec:algorithm:selecting_vertices}
The shadowRating value of a configuration is intended to capture the \enquote{cost} of selecting that configuration in terms of the remaining solution space. 
For example, adding a configuration to \(\mathcal{C}\) that shadows huge portions of the conflict graph is \enquote{expensive}.
Conversely, the cost is lower, the fewer neighbors are shadowed.
If neighbors are shadowed, it is preferable to shadow configurations of those flows with lots of remaining eligible configurations.
This intuition is encoded in the pseudo-code given in Alg.~\ref{alg:shadowRating}.
\begin{algorithm}
    \caption{computeShadowRating}\label{alg:shadowRating}
    \Input{configuration $c$}
    shadowRating = 0 \;
    \(\mathcal{F}_\text{neig} \gets\) set of flows of all neighbors of \(c\)\;
    \ForEach{\(f_n \in \mathcal{F}_\text{neig}\)}{
        \(\text{shadowCount}\) \(\gets\) number of \emph{eligible} configurations of flow \(f_n\) in neighborhood of \(c\) \;
        \(\text{eligibleCount}\) \(\gets\) total number of eligible configurations in \(f_n\)\;
        \label{alg:shadowRating:eligibleCount}
        $\delta \gets$ \(\text{shadowCount}\) / \(\text{eligibleCount}\) \;
        \lIf{$\delta$ = 1}{
            shadowRating+=$\alpha$
        }
        \lElse{
            shadowRating+=$\delta$
        }
    }
    \Return shadowRating of \(c\)\;
\end{algorithm}

We compute for each flow \(f_\text{n}\) with a configuration in the 1-hop neighborhood of \(c\) the share \(\delta\) of this flow's remaining eligible configurations that would be shadowed by picking \(c\).
Configurations which are already shadowed are not taken into account.
The shadowRating ordinarily amounts to the sum over the \(\delta\) values of \(c\)'s neighborhood.
If adding $c$ to $\mathcal{C}$ shadows all remaining eligible configurations of a flow (\(\delta=1\)), a very large constant $\alpha$ (default \(\alpha=1000\)) is added instead to discourage the selection of a configuration \(c\) which shadows the remaining configuration(s) of another flow.
Note that the shadowRatings can be computed in parallel since we only have to read the neighbors of $c$.
Further, the \textit{eligibleCount} calculation (cf. Alg~\ref{alg:shadowRating}, line~\ref{alg:shadowRating:eligibleCount}) can be optimized via caching.

To give a brief example of how GFH works, we revisit the conflict graph given in Fig.~\ref{fig:interflowconflicts}.
As there are no solitary configurations, GFH starts with \(\mathcal{C}=\emptyset\).
The active flows \(f_a\) and \(f_b\) are handled by the first addConfigPerFlow call.
Hence, the heap consists of these two flows where \(f_a\) has three and \(f_b\) has two eligible configurations, respectively.
First, the best configuration of flow \(f_b\) is calculated since \(f_b\) has less eligible configurations.
Since \(b1\) shadows all configurations of \(f_c\), \(b1\) gets a very high shadowRating whereas \(b2\) gets the best rating with \(0.33\) and is added to \(\mathcal{C}\).
Second, flow \(f_a\) is processed.
Here, \(a3\) has the lowest shadowRating (0.0 since \(b3\) is locked) and is added to \(\mathcal{C}\).
Note that \(a4\) and \(a5\) are locked, \(a2\) is shadowed and \(a1\) would shadow all configurations of \(f_c\).
Finally, addConfigPerFlow is called a second time with the new flows \(\operatorname{searchF}(p')=\{f_c\}\).
Thereby, \(c1\), which is the only configuration of \(f_c\), is added to \(\mathcal{C}\).
\markChanges{
Since all flows are admitted, there is no need to use the re-run mechanism in this example.
}

\subsubsection{Re-run Mechanism}
\label{sec:algorithm:rerun_mechanism}
\markChanges{
The re-run mechanism provided by the GFH (cf. Alg.~\ref{alg:gfh}) can be used to improve the number of flows admitted by \(\mathcal{C}\), e.g., if not all flows are in \(\mathcal{C}\).
In principle, a single run in the GFH corresponds to executing the body of the loop in Alg.~\ref{alg:gfh} (starting at line~\ref{alg:gfh:loop}) once.
}
\begin{algorithm}
    \caption{GFH}\label{alg:gfh}
    \Input{active flow set \(\operatorname{ActiveF}(p)\), new flow set \(\operatorname{ReqF}(p)\), \(n_\text{re-runs}\)}
    \(\mathcal{C} \gets \emptyset\) \tcp*{init. $\mathcal{C}$ (first run)} \label{alg:gfh:initC}
    \(\text{cache} \gets \emptyset\)\;
    \For{\(i = 0; i \leq n_\text{re-runs}; i\)++}{
        \label{alg:gfh:loop}
        \(\operatorname{naActiveF} \gets \operatorname{ActiveF}(p) \setminus \mathcal{C}\)\; \(\operatorname{aActiveF} \gets \operatorname{ActiveF}(p) \bigcap \mathcal{C}\)\;
        \(\operatorname{naReqF} \gets \operatorname{ReqF}(p) \setminus \mathcal{C}\)\; \(\operatorname{aReqF} \gets \operatorname{ReqF}(p) \bigcap \mathcal{C}\)\;
		\tcp{\(\mathcal{C} \gets\) solitary configurations:} 
        \(\mathcal{C} \gets \{v \in \mathcal{V} : degree(v) == 0\}\)\;
        \label{alg:gfh:resetC}
        \(\mathcal{C} \gets \text{addConfigPerFlow}(\operatorname{naActiveF},\mathcal{C})\)\;
        \label{alg:gfh:activeMissing}
        \(\mathcal{C} \gets \text{addConfigPerFlow}(\operatorname{aActiveF},\mathcal{C})\)\;
        \label{alg:gfh:activeF}
        \(\mathcal{C} \gets \text{addConfigPerFlow}(\operatorname{naReqF},\mathcal{C})\)\;
        \label{alg:gfh:newMissing}
        \(\mathcal{C} \gets \text{addConfigPerFlow}(\operatorname{aReqF},\mathcal{C})\)\;
        \label{alg:gfh:newF}
        \lIf{\(\mathcal{C} \ \text{admits} \ (\operatorname{ActiveF}(p) \cup \operatorname{ReqF}(p)) \)}{\Return \(\mathcal{C}\)
        }\lElse{
		\(\text{store } \mathcal{C} \text{ in cache}\) 
}
    }
    \Return \(\argmax_{\mathcal{C} \in \text{cache}} \text{TrafficPlanningObjective}(\mathcal{C})\) \tcp*{see Eq.~\ref{eq:objective_weighted_max_color}}
\end{algorithm}

\markChanges{
In each run, we call addConfigPerFlow once for every flow subset, using it as parameter searchF(\(p'\)).
Remember that the order of the flow subsets for which we execute addConfigPerFlow implicitly assigns priorities to those subsets.
Due to the greedy strategy, the earlier we include a flow \(f\) from \(\mathbf{G}(p')\) in searchF(\(p'\)) the likelier it is that a configuration for \(f\) is added to \(\mathcal{C}\).
We take advantage of this in the re-run mechanism to prioritize those flows which were not admitted to \(\mathcal{C}\) in the previous run.
Note that we still have to account for the higher importance of active flows compared to new flows (cf. Eq.~\ref{eq:objective_weighted_max_color}).
To this end, we split both, ActiveF(\(p\)) and ReqF(\(p\)), into two subsets each: one subset contains all flows which were \textbf{a}dmitted by \(\mathcal{C}\) in the previous run (prefixed by \enquote{a}, i.e., aActiveF, aReqF in Alg.~\ref{alg:gfh}), and another subset contains those flows which could \textbf{n}ot be \textbf{a}dmitted previously (prefixed by \enquote{na}, i.e., naActiveF, naReqF in Alg.~\ref{alg:gfh}).
Before each run, we reset $\mathcal{C}$ (cf. Alg.~\ref{alg:gfh}, line~\ref{alg:gfh:resetC}).
Then, we first process the two subsets corresponding to ActiveF(\(p\)), before processing the two subsets corresponding to ReqF(\(p\)) (cf. Alg.~\ref{alg:gfh} line \ref{alg:gfh:activeMissing}-\ref{alg:gfh:newF}).
}
To be exact, we call addConfigPerFlow on the four flow subsets in the following order: 1) flows from ActiveF(\(p\)) previously not admitted by \(\mathcal{C}\), 2) flows from ActiveF(\(p\)) previously admitted by \(\mathcal{C}\), 3) flows from ReqF(\(p\)) previously not admitted by \(\mathcal{C}\), and 4) flows from ReqF(\(p\)) previously admitted by \(\mathcal{C}\).

\markChanges{	
The first run constitutes a special case where the flows admitted by \(\mathcal{C}\) are not the result of a previous run but contain only those flows with solitary configurations.
Solitary configurations do not shadow any eligible configuration and consequently cannot affect the results during the re-runs.
By convention, we do not count the first run as a re-run, i.e, the loop can be executed up to \(n_\text{re-runs} + 1\) times.
Re-runs are performed either until \(\mathcal{C}\) admits all flows, or we run out of re-runs (default \(n_\text{re-runs}=3\)).
This can improve the objective value, but it is not guaranteed.
}
If we use up all re-runs, the GFH algorithm returns the $\mathcal{C}$ with the highest objective value seen so far.

\subsubsection{Complexity}
\label{sec:algorithm:complexity}

The worst-case runtime complexity of a single GFH run, i.e., adding the solitary configurations to \(\mathcal{C}\) and performing a single run of addConfigPerFlow, can be stated as \(\mathscr{O}(\mathcal{E}+\mathcal{V}\mathcal{F})\).
Here, \(\mathcal{E}\) is the number of conflicts, \(\mathcal{V}\) is the number of configurations in the conflict graph, and \(\mathcal{F}\) represents the number of flows.
In the following, we derive this complexity.

Adding all solitary configurations is an \(\mathscr{O}(\mathcal{V})\) operation performed exactly once.
Building the filtered heap is an \(\mathscr{O}(2\mathcal{F}\cdot \mathcal{V})\) operation (cf. Alg.~\ref{alg:addConfigPerFlow}, line~\ref{alg:addConfigPerFlow:heapCreation}).
The heap updates later will be done in logarithmic time, resulting in \(\mathscr{O}(\mathcal{V} \cdot \log \mathcal{F})\).
We require the \(\mathcal{V}\) factor in the heap operations to count the eligible configurations for the flow comparisons every time.
The shadowRating is calculated for every configuration once.
Thus, we have \(\mathscr{O}(\mathcal{V}(d + \mathcal{F} + 1))\) computations, where \(d\) is the highest degree in the graph.
The \(\mathcal{F}\) parameter comes from accumulating the \textit{shadowCount}.
This assumes that we calculate the \textit{eligibleCount} (cf. Alg.~\ref{alg:shadowRating}, line~\ref{alg:shadowRating:eligibleCount}) once for each flow in the heap and use the (cached) \textit{eligibleCount} values for the computations of all configurations of this flow, which takes \(\mathscr{O}(\mathcal{F}\cdot \mathcal{V})\) time.
Finding \(c_{\text{sel}}\) can be done space-efficient for every flow in \(\mathscr{O}(\mathcal{F}\cdot \mathcal{V})\) time.
The eligibility update of the neighbors of \(c_{\text{sel}}\) (cf. Alg.~\ref{alg:addConfigPerFlow}, line~\ref{alg:addConfigPerFlow:update}) for every flow is an \(\mathscr{O}(d\cdot \mathcal{F})\) operation.
When combining these sub-routines, GFH ends up with a total runtime complexity of \(\mathscr{O}(\mathcal{V}(d + 5\mathcal{F} + \log \mathcal{F} + 2) + d\mathcal{F})\), which can be simplified to \(\mathscr{O}(\mathcal{E} + \mathcal{V}\mathcal{F})\).

\markChanges{
So far, we more or less considered a single run of the GFH.
The re-run mechanism adds another factor \(r'=n_\text{re-runs}+1\) resulting in a simplified overall complexity of \(\mathscr{O}(r'\mathcal{E}+r'\mathcal{V}\mathcal{F})\).
}

\subsection{Rejecting and Removing Flows}
\label{sec:rejects_removals}
If ReqF(\(p\)) contains flows that are not admitted by \(\mathcal{C}\), the planner rejects the corresponding flows.
All candidate configurations for a rejected flow are consequently purged from \(\mathbf{G}(p')\).
Similarly, applications could indicate to the planner that active flows shall be removed from the network.
In this case the planner also purges the corresponding candidate configurations from the conflict graph.

\subsection{Optimization: Progressive Strategy for Offensive Planning}
\label{sec:progressive_strategy}
The planner employs locking and GFH in a two-phase meta-strategy for offensive planning:

In the first, defensive phase, for each flow in ActiveF(\(p\)) \emph{every} configuration in  \(\operatorname{cand}(f,p') \setminus \operatorname{config}(f,p)\) is locked in \(\mathbf{G}(p')\), i.e., we conserve the configuration of all active flows.
Then \(\mathbf{G}(p')\) exposes only one configuration per active flow and all candidates for new flows to the GFH.
If we already find a traffic plan \(p'\) for \(\operatorname{ActiveF}(p) \cup \operatorname{ReqF}(p)\), we are done, and usually saved computation time since the GFH considered fewer configurations.

Only if we cannot admit all new flows, we release the conservative locks (configurations which cause transition interference, cf. Sec.~\ref{sec:constructing:locking} remain locked) and expose the \enquote{full} conflict graph to the GFH.
This second phase widens the search space for the GFH at the expense of a longer runtime, and active flows now may be reconfigured.
If the GFH rejects any active flows in the second phase, we revert to the result from the first phase, which is guaranteed to include all active flows in the new traffic plan \(p'\).

\section{Installing the Plan}
\label{sec:installing_plan}
After having computed the new plan, the controller propagates the sub-plans to the nodes in the network.
The sub-plan sent to a particular infrastructure node defines how the nodes are supposed to route incoming packets, when to reserve transmissions windows for these packets, and when the new plan becomes active, i.e., the traffic plan's activation time \(t_\text{act}\).
In other words, after having computed the new plan, it is installed in two steps:

Firstly, we send the sub-plans, flow-information sets, and the activation time for the new plan via \emph{control channels} to all the nodes in the network while the old traffic plan is still active.
Secondly, at the respective \(t_\text{act}\), the nodes switch over to the new flow information sets.
Such a time-based two-step update pattern is, e.g., specified in~\cite{ieeecomputersocietyIEEEStandardLocal2018} to update the gate control lists of the Time-Aware Shapers.

A transition to the new traffic plan shall only happen if all nodes have agreed that they have received the new sub-plans \emph{and} will put their sub-plan into action at \(t_\text{act}\).
This means, each traffic plan update requires to solve the well-known consensus problem with a deadline that equals \(t_\text{act}\) (minus the time nodes need to process update-protocol packets in the worst-case).
However, it is well-known that it is impossible to solve the consensus problem in asynchronous systems~\cite{fischerImpossibilityDistributedConsensus1985}.
In particular, termination within bounded time cannot be guaranteed in such systems.
From a practical perspective, though, our specific network update problem has some properties which make this less of an issue:

Firstly, as soon as nodes or links fail, the consensus problem more or less becomes irrelevant since our network may not even work for the active flows anymore.
To handle this case, we need additional counter-measures such as redundancy or fail-over on the networking level in the first place, and these have to be incorporated into the traffic plans as well, which is out of the scope of this paper.

Secondly, we can implement \emph{static} real-time control channels between the controller and each network node.
These channels could be established when the network is initialized.
For example, we can even use the traffic planning approach described in this paper to set up these channels---\markChanges{before processing any time-triggered traffic flow requests issued by \enquote{regular} applications}---as pinned time-triggered flows such that their associated QoS is not subject to degradation.
With real-time channels and bounded update-protocol processing times in the nodes, we effectively execute the network update in a synchronous system.
Therefore, agreeing on the update within a deadline is possible.
But even \enquote{cheaper} real-time control channel implementations in terms of occupied network resources, e.g., non-time-triggered flows with bandwidth guarantees, are \enquote{safe} since the controller can set the activation time to a hyper-cycle boundary \enquote{far-enough} in the future.
Obviously, this involves a compromise between network resources spent on control channels and the time it takes to install an update.

As discussed in Sec.~\ref{sec:constructingcg}, the controller has to postpone the activation time for \emph{sources} of new flows after the transition interval whereas this is not necessary (and it would also be impractical) for infrastructure nodes that forward packets of new flows.
This means, there is one activation time for the source nodes of flows from ActiveF(\(p\)) and the infrastructure nodes.
The activation time for the source nodes of new flows however is delayed by \(\lceil \frac{t_\text{transit}}{t_\text{cycle}}\rceil \cdot t_\text{cycle}\) with \(t_\text{transit}\) the duration of the longest transition interval. 
While the flows from ActiveF(\(p\)) already use the new traffic plan, which also has been deployed to all infrastructure nodes by that time, the source nodes of new flows have to wait a little longer before they are allowed to start sending packets. 

\section{QoS Considerations}
\label{sec:qos}
The reconfiguration of active flows can degrade the QoS by introducing jitter.
Since we aim for deterministic real-time communication, we must also quantify and possibly contain the QoS degradation caused by reconfigurations to a level acceptable by the applications.
Next, we study how the reconfiguration of an active flow can degrade QoS.
The level of degradation depends on the \enquote{distance} of the flow's old and new configuration, i.e., the phase shift and the difference in the lengths of the routes.
After showing how these properties affect QoS degradation, we present a way how applications can control the degree of degradation.

\subsection{Computing QoS Degradation}
Without reconfiguration, the destination node of an active flow receives the next packet every \(t_\text{cycle}\) seconds after the reception of the previous packet, and packets are received in the order they were sent from the source node.
Now assume, a new traffic plan \(p'\) supersedes \(p\), and an active flow \(f \in \operatorname{ActiveF(p)}\) is reconfigured, i.e., config(\(f,p\))\(\neq\)config(\(f,p'\)).
W.l.o.g., the activation time of the new traffic plan \(p'\) is the start of the \(n+1\)-th cycle of \(f\).
\begin{figure}
	\centering
	\includegraphics[
	trim={.4cm .4cm .4cm .4cm},
	clip,
	width=\linewidth
	]{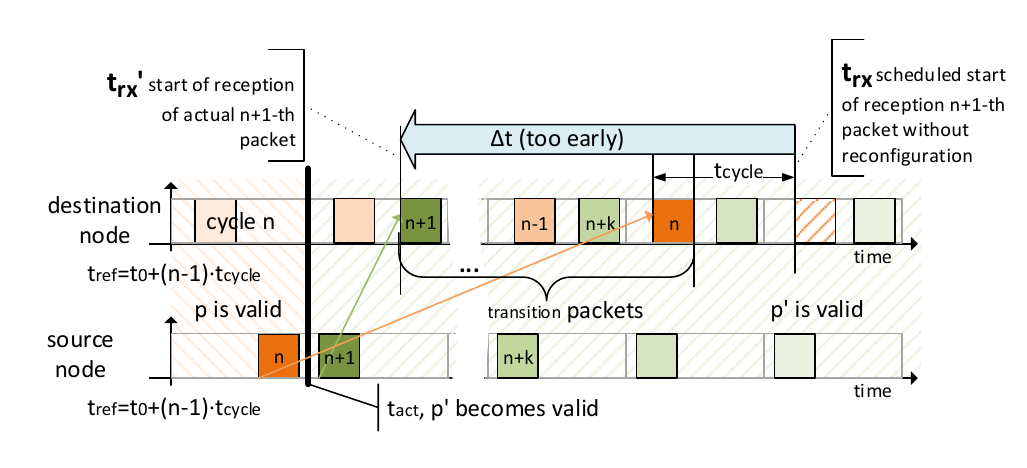}
	\caption{
        \markChanges{
            Reconfiguration of an active flow can temporarily cause jitter and packet reordering.
            Here, the packets with numbers \(n+1\) to \(n+k\) arrive \enquote{too early}, i.e., before packet \(n\).
        }
    }
	\label{fig:qos_degradation}
\end{figure}
Consequently, the point in time \(t_\text{rx}'\) when the destination node receives the \(n+1\)-th packet, which is sent with config(\(f,p'\)), may differ from the point in time \(t_\text{rx}\) when the destination would have received the \(n+1\)-th packet without reconfiguration, i.e., if the \(n+1\)-th packet had been sent with the old config(\(f,p\)), cf. Fig~\ref{fig:qos_degradation}.
We call the magnitude of this deviation \(\Delta t = t_\text{rx}' - t_\text{rx}\) the (reconfiguration) \emph{jitter}.
\(\Delta t\) can be computed with the following theorem.
\begin{theorem}[\(\Delta t\)]
	After a reconfiguration of flow \(f\) from \((f,\phi,\pi) = \operatorname{config}(f,p)\) to \((f,\phi',\pi') = \operatorname{config}(f,p')\), the arrival of the new packets at the destination node deviates by
	\(\Delta t = \left(\phi' - \phi\right) + \left(l(\pi') - l(\pi)\right) \cdot t_\text{perhop}\)
	from their respective scheduled arrival according to previous config(\(f,p\)) with \(l(\ast)\) denoting the number of infrastructure nodes on the candidate path with index \(\ast\).
\end{theorem}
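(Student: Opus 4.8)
The plan is to compute both arrival times $t_\text{rx}$ and $t_\text{rx}'$ explicitly from the source scheduling rule of Sec.~\ref{sec:deterministic_flows} together with the deterministic end-to-end delay of Sec.~\ref{sec:zero-queuing}, and then to subtract them; the stated expression should fall out after cancelling every term that is independent of the route.

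First I would pin down the packet index so that the two arrivals being compared correspond to the \emph{same} emission slot. By the w.l.o.g.\ assumption, $t_\text{act}$ coincides with the start of the $(n+1)$-th cycle of $f$, hence both the realized packet (sent with the new configuration) and the counterfactual packet (the $(n+1)$-th packet had $f$ kept its old configuration) carry cycle index $k=n$. Applying the scheduling rule $k \cdot t_\text{cycle} + \phi + t_0$ gives send times $t_\text{send} = n \cdot t_\text{cycle} + \phi + t_0$ and $t_\text{send}' = n \cdot t_\text{cycle} + \phi' + t_0$, so that $t_\text{send}' - t_\text{send} = \phi' - \phi$. Next, since zero-queuing forces a packet to accumulate exactly the deterministic delay $t_\text{e2e}$, I would write $t_\text{rx} = t_\text{send} + t_\text{e2e}(\pi)$ and $t_\text{rx}' = t_\text{send}' + t_\text{e2e}(\pi')$. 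Expanding $t_\text{e2e}(\ast) = t_\text{trans} + t_\text{prop} + l(\ast) \cdot t_\text{perhop} + t_\text{src} + t_\text{dst}$, the only route-dependent summand is $l(\ast) \cdot t_\text{perhop}$: here $t_\text{trans}$ depends only on packet size and the network-wide bandwidth, and $t_\text{prop}$, $t_\text{proc}$ (hence $t_\text{perhop}$), $t_\text{src}$, and $t_\text{dst}$ are path-independent constants. Thus $t_\text{e2e}(\pi') - t_\text{e2e}(\pi) = (l(\pi') - l(\pi)) \cdot t_\text{perhop}$, and summing the two contributions yields $\Delta t = (\phi' - \phi) + (l(\pi') - l(\pi)) \cdot t_\text{perhop}$, as claimed.

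I do not anticipate a genuine obstacle, as the argument collapses to a single cancellation once the zero-queuing delay formula is invoked. The step deserving the most care is the index bookkeeping: the derivation must justify that the realized and counterfactual packets share the same $k$, which is exactly what placing $t_\text{act}$ on a cycle boundary secures. I would also keep $\Delta t$ signed throughout---a shorter route or a smaller phase renders it negative---since a negative value is precisely the packet re-ordering illustrated in Fig.~\ref{fig:qos_degradation}, and taking an absolute value anywhere would lose that information.
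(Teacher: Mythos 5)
Your proposal is correct and follows essentially the same route as the paper's proof: both compute $t_\text{rx}$ and $t_\text{rx}'$ explicitly from the cyclic source transmission schedule plus the deterministic zero-queuing end-to-end delay, and subtract so that all path-independent terms ($t_\text{src}$, $t_\text{trans}$, $t_\text{prop}$, $t_\text{dst}$, and the cycle offset) cancel. Your explicit remark that the realized and counterfactual packets share the same emission slot is exactly the bookkeeping the paper handles by adding $t_\text{cycle}$ to the $n$-th packet's reception time.
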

This means, \(\Delta t\) is composed of a term \(\left(\phi' - \phi\right)\) expressing the phase-difference, and a term \(\left(l(\pi') - l(\pi) \right) \cdot t_\text{perhop}\) that expresses the difference between the traversal times of a packet on the old route and the new route.
\begin{proof}
	W.l.o.g. we use the start \(n\)-th cycle as reference time \(t_\text{ref}\).
	In the \(n\)-th cycle, the source node starts transmitting the \(n\)-th packet at \(t_\text{ref} + \phi\).
	The reception of this packet by the destination node starts at \(t_\text{ref} + \phi + t_\text{src} + t_\text{trans} + t_\text{prop} + l(\pi) \cdot t_\text{perhop} + t_\text{dst}\) (cf. Sec.~\ref{sec:zero-queuing}).
	If \(p\) remained valid, the reception of the \(n+1\)-th packet then would start \(t_\text{cycle}\) later at \(t_\text{rx}= t_\text{ref} + \phi + t_\text{src} + t_\text{trans} + t_\text{prop} + l(\pi) \cdot t_\text{perhop} + t_\text{dst} + t_\text{cycle}\).
	
	However, the new plan \(p'\) and thus config(\(f,p'\)) becomes valid at \(t_\text{ref} + t_\text{cycle}\), which is the start of the \(n+1\)-th cycle.
	The reception of the \(n+1\)-th packet sent by the source node with config(\(f,p'\)) starts at \(t_\text{rx}' = t_\text{ref} + t_\text{cycle} + \phi' + t_\text{src} + t_\text{trans} + t_\text{prop} + l(\pi') \cdot t_\text{perhop} + t_\text{dst}\) at the destination node.
	By inserting these values, we get 
	\(
	\Delta t = t_\text{rx}' - t_\text{rx} = \left(\phi' - \phi\right) + (l(\pi') - l(\pi) ) \cdot t_\text{perhop}
	\).
\end{proof}
If \(\Delta t > 0\), there is an additional delay between the last packet of \(\langle f,p \rangle\) and the first packet of \(\langle f,p' \rangle\).
\markChanges{
Intuitively, a reconfiguration that results in either a longer path or a phase increment (i.e., \(\phi'>\phi\)), or both, results in \(\Delta t > 0\).
}
If \(\Delta t < 0\), the new packets from \(\langle f,p' \rangle\) arrive too early.
\markChanges{
This happens, if either the reconfiguration results in phase decrement (\(\phi'<\phi\)), or a shorter route \(l(\pi')<l(\pi)\), or both of these.
}
Potentially, this can lead to a situation, where packets from \(\langle f,p' \rangle\) overtake the last packets from \(\langle f,p \rangle\) in the network if the relative end-to-end deadline is allowed to be greater than the cycle time.

We can upper bound the number of packets which possibly arrive in a different order than the order in which they were sent, and do not have an inter-arrival time \(t_\text{cycle}\) by \(n_\text{transition} \leq 2\cdot\left\lceil\frac{\left|\Delta t\right|}{t_\text{cycle}}\right\rceil\).
The term \(2\cdot\left\lceil\frac{\left|\Delta t\right|}{t_\text{cycle}}\right\rceil\) approximates the worst-case where old and new packets arrive interleavedly at the destination node until all old packets are delivered.

It is important to note that any individual packet that is sent by a source node will always arrive within the end-to-end deadline bounds.
Reconfigurations only affect the relative inter-arrival times in a packet train.

\begin{table*}[h]
    \caption{Overview of evaluation scenario parameters.}
    \label{tab:eval_properties}
    \centering
    \resizebox{0.7\linewidth}{!}{
        \begin{tabular}{l@{\hskip 0.2em} c@{\hskip 0.2em} c@{\hskip 0.2em} c@{\hskip 0.2em} c@{\hskip 0.2em} c@{\hskip 0.2em} c@{\hskip 0.2em} c@{\hskip 0.2em} c@{\hskip 0.2em}}
            \toprule
            Fig.& \textbf{\(\left|\operatorname{ActiveF}(p)\right|\)} & \textbf{\(\left|\operatorname{ReqF}(p)\right|\)} & flows to remove & \(n_\text{ub}\) & \(t_\text{cycle}\) [\SI{}{\micro\second}] &  \(n_\text{path}\) & network & \shortstack[c]{\(\sum\) ingress rate } \\
            \midrule
            \ref{fig:eval_const:runtime500} & 500 & 25 (poiss.) & 25 (poiss.) & \(\leq\) \{50,100\} & \{250,500,1000,2000\} & 3 & ring(64,3) & \(\sim\)615MB/s \\
            \ref{fig:eval_const:runtime800},\ref{fig:eval_const:rejects800} & 800 & 50 (poiss.) & 50 (poiss.) &  \(\leq\) \{50,100\} & \{250,500,1000,2000\} & 3 & ring(64,3) & \(\sim\)984MB/s\\
            \midrule
            \ref{fig:eval_compare:cummulative}, \ref{fig:eval_compare} & 250 & 25 (det.) & 25 (det.) & \(\leq\) 100 & \{200,250,500\} & 3 & ring(64,3) & \(\sim\)601MB/s\\
            \midrule
            \ref{fig:eval_topo:gfh}, \ref{fig:eval_topo:size49} & 500 & 50 (det.) & 50 (det.) & \(\leq\) 100 & \{250,500\} & 5 & var., 49 nodes & \(\sim\)984MB/s \\
            \ref{fig:eval_topo:gfh}, \ref{fig:eval_topo:size81} & 500 & 50 (det.) & 50 (det.) & \(\leq\) 100 & \{250,500\} & 5 & var., 81 nodes & \(\sim\)984MB/s\\
            \midrule
            \bottomrule
        \end{tabular}
    }
    \resizebox{0.7\linewidth}{!}{
        \begin{tabular}{l@{\hskip 0.5em} l}
            \toprule
            \(t_\text{trans}\) (packet size)  & \(\in\) \{\SI{1}{\micro\second},\SI{3}{\micro\second},\SI{5}{\micro\second},\SI{12}{\micro\second}\} (corresponds to \SI{125}{\byte} -- \SI{1500}{\byte} packets on \SI{1}{\giga\bit\per\second} links) \\
            flow clusters & \(\in \{1,2,4,8,16,32\}\) (subject to \(\left|\operatorname{ReqF}(p)\right|\) and num. flows to remove)\\
            \midrule
            network parameters &  processing delay \(t_\text{proc}=\SI{2}{\micro\second}\), propagation delay \(t_\text{prop}=\SI{1}{\micro\second}\)\\
            \bottomrule
        \end{tabular}
    }
\end{table*}

\subsection{Restricting QoS Degradation}
Applications can provide the planner with QoS degradation constraints in the form of thresholds on \(\left|\Delta t\right|\) and the number of affected packets.
If the destination node has no facilities to handle packet reorderings, e.g., has no buffer, then such constraints can be used to ensure that all packets are received in the sending order.
The QoS degradation constraint results in the following condition for \(\mathbf{G}(p')\):
\begin{description}
	\item[QoS]
    For each \(f \in \operatorname{ActiveF}(p)\), there exists no configuration \(c \in \operatorname{cand}(f,p')\), where the reconfiguration from \(\operatorname{config}(f,p)\) to \(c\) exceeds the threshold on \(\left|\Delta t\right|\).
\end{description}
Analog to Condition 1 from Sec.~\ref{sec:constructing:locking}, we can prevent reconfigurations that violate the QoS condition by \emph{locking}.
This can be achieved with minimal overhead since the planner anyway traverses \(c \in \operatorname{cand}(f,p')\), and only needs to check the QoS condition for each candidate \(c\) of \(f\) that is not locked already due to transition interference.

\section{Evaluation}
\label{sec:eval}
We evaluated a prototypical C++-implementation of the planner.
The evaluation scenarios consist of a sequence of planning rounds.
In each round, the planner processes a set of flow requests which can include requests for new flows as well as requests for the removal of active flows.
Besides the conflict graph adjustments, and the computation of the new traffic plan \(p'\), the planner also validates the absence of configuration conflicts in the new traffic plan in each round.

\subsection{Setup and Parameters}
By default, we used a desktop-grade computer with Intel i7-10700K (8 cores) and \SI{16}{\giga\byte} RAM for the evaluations.

Table~\ref{tab:eval_properties} gives an overview over the evaluation parameters which have been derived from typical industrial use cases~\cite{noauthor_iecieee_nodate,industrial_internet_consortium_time_2019}.
For each flow, we draw the values for \(t_\text{trans}\) and \(t_\text{cycle}\) uniformly at random from the respective set.
The number of new flows to add, i.e., \(\left|\operatorname{ReqF}(p)\right|\) and the number of flows to remove is either drawn from a truncated Poisson distribution (poiss.) with averages as stated in Tab.~\ref{tab:eval_properties}, or is deterministic (det.) for each planning round.
\markChanges{
The expected ingress traffic data rate, i.e., the aggregate of the traffic entering the network, is computed from the random distributions for \(t_\text{cycle}\) and \(t_\text{trans}\).
}
We place/remove the flows in clusters in the network to simulate control units connected to multiple sensors and actuators, where all flows in a cluster start or target one common \enquote{cluster} node.
The cluster sizes correspond to commonly found configurations of I/O bays of industrial control units.
For example, to add 25 new flows the request may contain three clusters of size 1, 8 and 16 flows, respectively, or any other combination that adds up to 25 flows.
By default, 20\% of the new flows are pinned permanently, and we limit \(\Delta t\) for each remaining active flow to \(t_\text{cycle}-t_\text{trans}\) such that packets arrive in order at the destination node.
\markChanges{
We omit specifying an explicit value for \(t_\text{e2e}\).
Instead, we use a k-shortest path algorithm which provides us with the candidate paths with the lowest end-to-end delays that are possible within the given topology.
}

In Sec.~\ref{sec:eval_const} and \ref{sec:eval_schedulability}, we consider circular networks with \(n\) nodes where each node is connected to the next \(k\) nodes in both directions (denoted by ring(\(n\),\(k\))).
\markChanges{
Besides being a common building block in, e.g., industrial automation networks~\cite{hellmannsScalingTSNScheduling2020a} or sensor networks, the regularity of ring(\(n\),\(k\)) topologies eliminates several sources of uncertainty that could, e.g., result in infeasible problem instances, or otherwise limit the problem space that we can evaluate.
For example, think of topologies with a bottleneck link connecting two sub-networks that---depending on how many flows have to cross that bottleneck---will result in massively different numbers of \enquote{acceptable} flows for the same topology.
}
In contrast, in ring(\(n,k\)) topologies, we can expect similar behavior for the flow placements since each node in the network has the same degree and there is an equal number of alternative paths between any pair of nodes for this topology.
In Sec.~\ref{sec:eval:topology}, we investigate the effects of different network topologies considering additional graph models, too.

In Tab.~\ref{tab:eval_properties}, column \(n_\text{ub}\) denotes the upper bound on how many candidates per flow the planner may add at most to \(\mathbf{G}(p')\).
Likewise, \(n_\text{path}\) is the upper bound on the number of candidate paths per flow.
Both, \(n_\text{ub}\) and \(n_\text{path}\), can be considered parameters of our planner which result in different trade-offs wrt. runtime and schedulability.
For example, for meshed networks, it is intuitive to expect better results with more candidate paths at the price of longer runtimes~\cite{jonathan_falk_optimal_2019}, but we expect a diminishing return with regard to the number of candidate paths due to shared sub-paths on the k-shortest paths.
Therefore, we use 3-5 candidate paths per flow, which already yields a very high schedulability in our evaluations (similar to \cite{falkTimeTriggeredTrafficPlanning2020}).

\begin{figure*}[!htpb]
    \centering
    \begin{subfigure}[t]{0.31\linewidth}
        \centering
        \includegraphics[
        trim={.0cm .0cm .0cm .0cm},
        height=1.4in
        ]{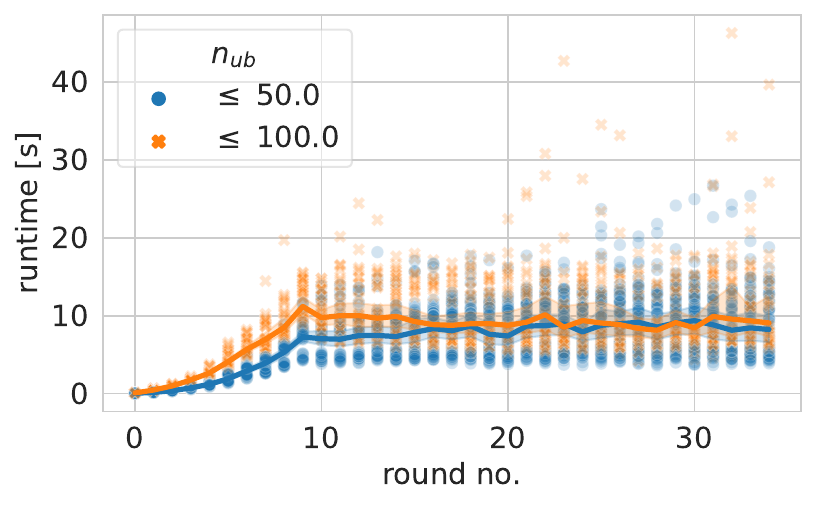}
        \caption{Runtime per round (\(\approx\)500 active flows).}
        \label{fig:eval_const:runtime500}
    \end{subfigure}
    \hspace{.5em}
    \begin{subfigure}[t]{0.31\linewidth}
        \centering
        \includegraphics[
        trim={.0cm .0cm .0cm .0cm},
        height=1.4in
        ]{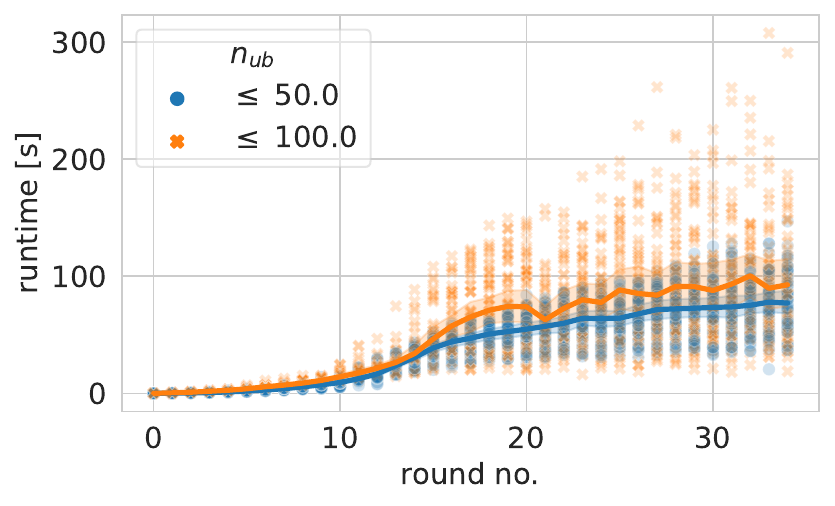}
        \caption{Runtime per round (\(\approx\)800 active flows).}
        \label{fig:eval_const:runtime800}	
    \end{subfigure}
    \hspace{.5em}
    \begin{subfigure}[t]{0.31\linewidth}
        \centering
        \includegraphics[
        trim={.0cm .0cm .0cm .0cm},
        height=1.4in
        ]{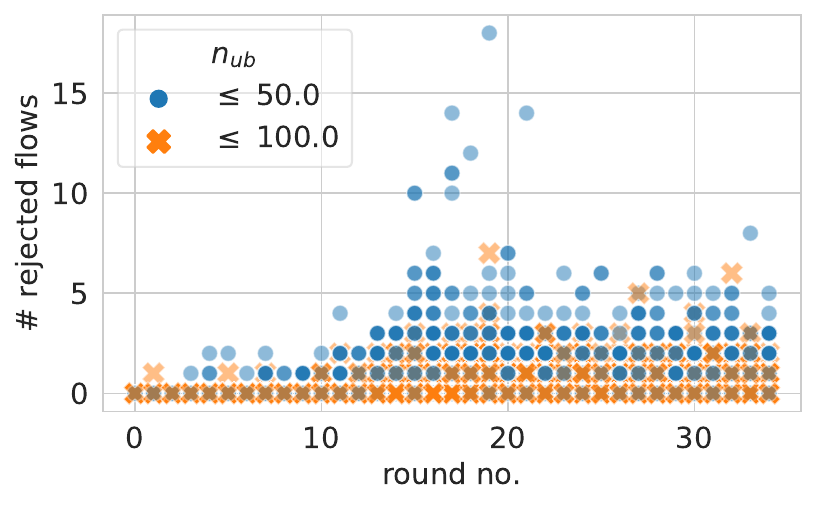}
        \caption{Rejections per round (\(\approx\)800 act. flows).}
        \label{fig:eval_const:rejects800}
    \end{subfigure}
    \caption{Runtimes and number of rejected flows per planning round (500 and 800 active flows).}
    \label{fig:eval_const}
\end{figure*}

\subsection{Runtime}
\label{sec:eval_const}
We evaluated 60 scenarios wrt. runtime for each combination of average active flows (500, 800) and per-flow candidate set increment limit (50, 100).
Fig.~\ref{fig:eval_const} shows the results for 35 planning rounds.

We plot the \emph{total} runtime for each planning round in Fig.~\ref{fig:eval_const:runtime500} and \ref{fig:eval_const:runtime800}, which includes the time to compute the candidate paths for new flows, all the operations on the conflict graph (adding, locking, pinning and removing candidate configurations) and the GFH runtime.
For these scenarios, we measured up to \SI{9.072}{\giga\byte} RAM usage by the planner.

In the first 10 (16) rounds, we always issued requests to add 50 flows (no flow-removals) to initialize the conflict graph from scratch until ActiveF(\(p\)) contains \(\approx\)500 (800) flows.
During these initialization rounds, we observe an increasing total runtime.
After the initialization rounds, the planner receives requests to remove and add flows.

In the smaller scenarios with \(\approx\)500 flows (cf. Fig.~\ref{fig:eval_const:runtime500}), the runtime per round plateaus (avg: \SI{8.24}{\second}\(\pm\)\SI{3.58}{\second}, \(n_\text{ub}\)=50; \SI{9.2}{\second}\(\pm\)\SI{4.03}{\second}, \(n_\text{ub}\)=100) after the initialization rounds.
In other words, here it takes on average less than \SI{10}{\second} to exchanges 25 active flows (and their configurations) against 25 new flows (and the respective new configurations) even though the planner can reconfigure \(\approx\)400 active flows.
Here, we also have a quite \enquote{stable} conflict graph size in the post-initialization rounds which results in the comparatively constant runtimes per round.

In Fig.~\ref{fig:eval_const:runtime800}, the runtimes for the bigger scenarios with \(\approx\)800 active flows are depicted.
Here, the runtime continues to grow after the 16 initialization rounds (max: \SI{146.83}{\second} for \(n_\text{ub}\)=50; \SI{307.72}{\second} for \(n_\text{ub}\)=100).
This behavior can be explained with Fig.~\ref{fig:eval_const:rejects800} where we plot for each scenario how many new flows the planner rejected in each round.
As discussed in Sec.~\ref{sec:constructingcg:adding_candidates}, after the first pass over the \(\phi\)-\(\pi\)-space the planner adds more candidates for \emph{active flows} if flows from the previous request had to be rejected.
This means, we actually observe the desired expansion of the solution-space since we have to fit \(\approx\)300 more flows in the same network as for the smaller scenarios,
This causes the planner to grow the conflict graph more aggressively---resulting in longer runtimes per round.

The effects of varying \(n_\text{ub}\) can also be observed: if the planner may add up to 100 additional candidates per flow, it can admit more new flows, but generally takes longer:
In the scenarios with 800 flows the planner rejected on average 0.84 (0.23) flows per request set after the initialization rounds for \(n_\text{ub}=50\) (\(n_\text{ub}=100\)) per round.

\begin{figure}
	\centering
	\includegraphics[
	trim={.0cm .0cm .0cm .0cm},
	height=1.4in
	]{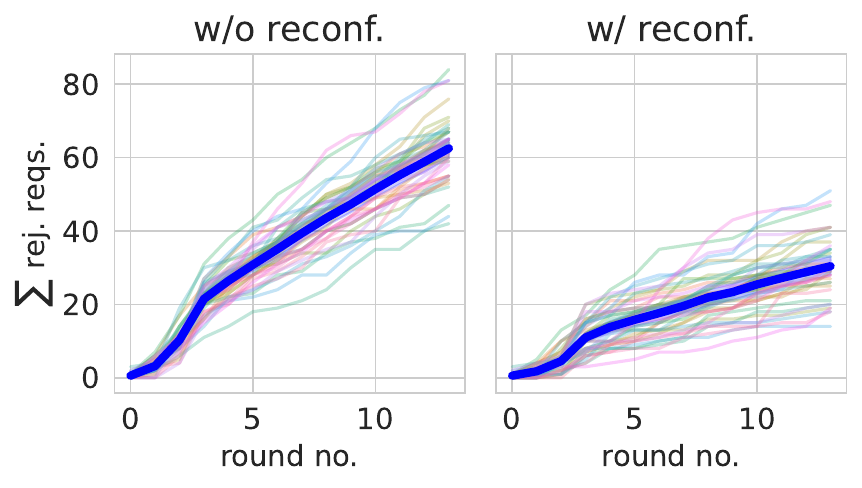}
	\caption{Defensive vs. offensive planning: Cumulative rejected flows (thin lines: per scenario, bold line: average).}
	\label{fig:eval_compare:cummulative}
\end{figure} 

\begin{figure*}[!htpb]
    \centering
    \begin{subfigure}[t]{0.32\linewidth}
        \centering
        \includegraphics[
        trim={.0cm .0cm .0cm .0cm},
        height=1.4in
        ]{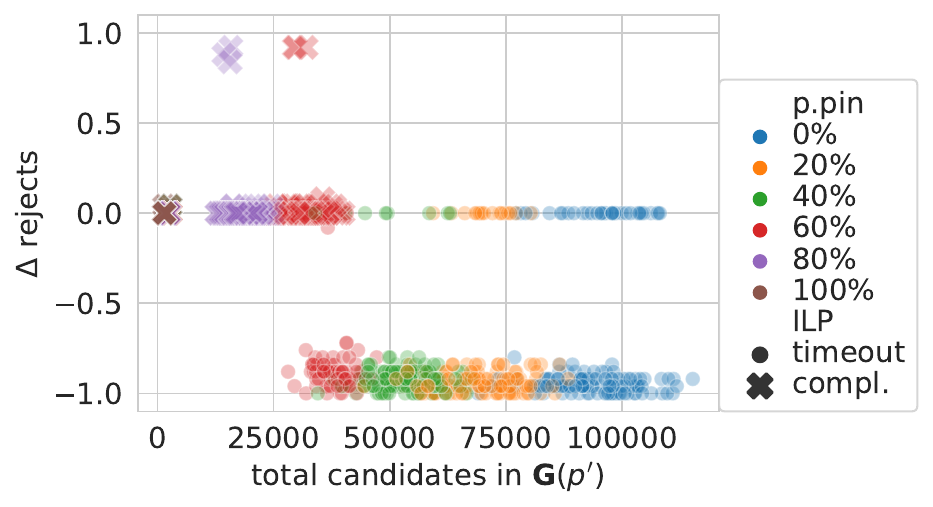}
        \caption{GFH vs. ILP: Relative difference wrt. rejected new flows for varying ratio of permanently-pinned active flows and \SI{5}{\min} runtime-limit.}
        \label{fig:eval_compare:delta}	
    \end{subfigure}
    \hspace{1em}
    \begin{subfigure}[t]{0.23\linewidth}
        \centering
        \includegraphics[
        trim={.0cm .0cm .0cm .0cm},
        height=1.4in
        ]{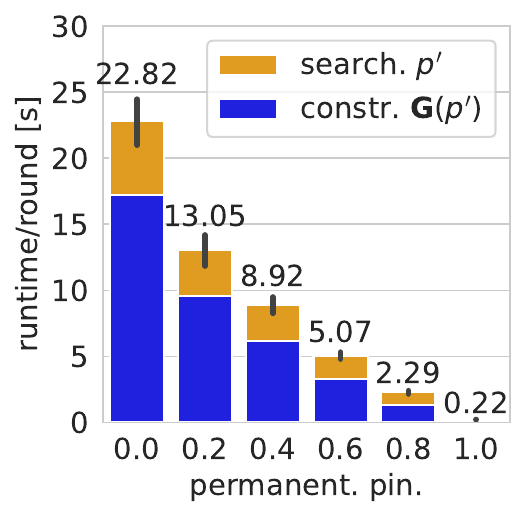}
        \caption{Runtimes per round for scenarios from Fig.~\ref{fig:eval_compare:delta} with GFH.}
        \label{fig:eval_compare:runtimes}	
    \end{subfigure}
	\hspace{1em}
	\begin{subfigure}[t]{0.32\linewidth}
		\centering
		\includegraphics[trim={.0cm .0cm .0cm .0cm},
		height=1.4in
		]{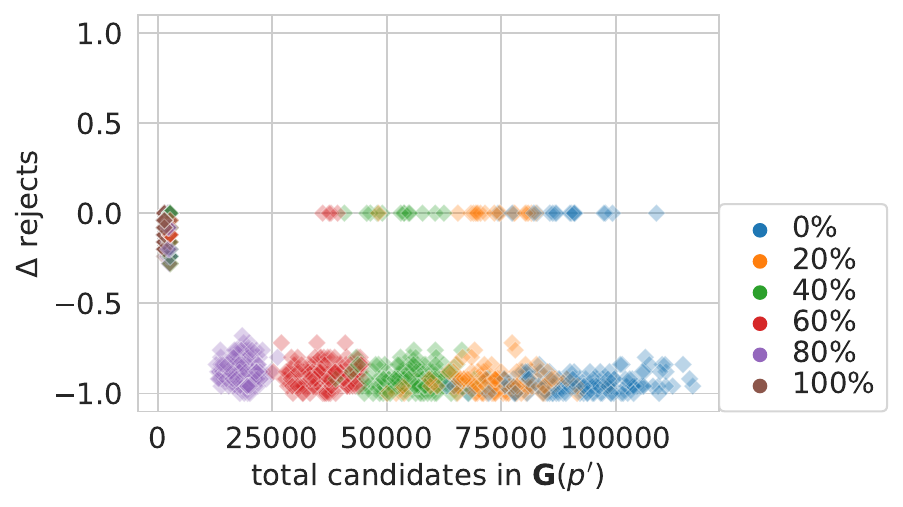}
		\caption{
			\markChanges{
				GFH vs. aLA: Relative difference wrt. rejected new flows for varying ratio of permanently-pinned active flows.
			}
		}
		\label{fig:eval_compare:delta_ala}
	\end{subfigure}
    \caption{Comparison of schedulability and runtime for varying ratios of non-reconfigurable (permanently-pinned) active flows.}
    \label{fig:eval_compare}
\end{figure*}

\subsection{Comparison}
\label{sec:eval_schedulability}
Next, we compare defensive and offensive traffic planning wrt. schedulability, and show that the new algorithms such as the GFH are required to make offensive traffic planning feasible.

The evaluation scenarios vary with respect to how many active flows are on average permanently pinned to their configuration.
We increase the ratio of pinned flows from 0\% to 100\% in 20\%-increments and evaluated 40 scenarios for each round and pinning-setting.
Here, there is no QoS restriction on \(\Delta t\) for each active flow that can be reconfigured.

\subsubsection{Defensive vs. offensive planning}
In Fig.~\ref{fig:eval_compare:cummulative}, we plot the cumulative number of flows rejected by the planner over 14 rounds.
By pinning every active flow (cf. w/o reconfiguration, permanently pinning 100\% of flows in Fig.~\ref{fig:eval_compare:cummulative}), our approach performs defensive traffic planning, which results in a total of 62.5 rejected new flows on average.
In comparison, if the planner performs offensive traffic planning less than half as many flows---30.4 flows on average---are rejected (cf. w/ reconfiguration, permanently pinning 0\% of flows in Fig.~\ref{fig:eval_compare:cummulative}).

\subsubsection{Comparison with ILP}
Since the GFH is a heuristic, the question arises how \enquote{optimal} the GFH results are.
To answer this, we compare the GFH against a drop-in integer linear programming implementation, which is given the same input as the GFH, namely, the conflict graph \(\mathbf{G}(p')\) and the flow sets ActiveF(\(p\)) and ReqF(\(p\)), and has to compute \(p'\) which optimizes Eq.~\ref{eq:objective_weighted_max_color}.
To make it clear, this ILP does not encode the whole traffic planning problem itself, but requires the planner to encode the traffic planning constraints in \(\mathbf{G}(p')\).
The ILP constraints are given in Eq.~\ref{eq:ilp:obj}-\ref{eq:ilp:constraints_end}, and, except for the objective, correspond to the ILP used as \enquote{sure} algorithm in~\cite{falkTimeTriggeredTrafficPlanning2020}.
We have two sets of binary decision variables:
For each flow, \(\mathcal{X}_f[f]\in\{0,1\}\) encodes whether flow \(f\) can be admitted.
Similarly, a value of 1 for \(\mathcal{X}_c[v]\in\{0,1\}\) encodes that configuration \(v\) is added \(\mathcal{C}\).

{\small
\begin{align}
\max \sum_{\mathclap{f \in \operatorname{ActiveF}(p)}} \mathcal{X}_f[f] + \sum_{f \in \operatorname{ReqF}(p)} \frac{\mathcal{X}_f[f]}{\left|\operatorname{ActiveF}(p) \cup \operatorname{ReqF}(p)\right|} \label{eq:ilp:obj}\\
\text{subject to:} \qquad \forall (v, u) \in \mathcal{E}: \mathcal{X}_c[v] + \mathcal{X}_c[u] \leq 1\\
\forall f \in \operatorname{ActiveF}(p) \cup \operatorname{ReqF}(p): X_f[f] \leq \sum_{\mathclap{v \in \operatorname{cand}(f,p)}} \mathcal{X}_c[v] \label{eq:ilp:constraints_end}
\end{align}
}%

For this comparison, we saved the conflict graph instances from the last four planning rounds from Fig.~\ref{fig:eval_compare:cummulative}---presumably the largest conflict graphs in each scenario---to disk, and solved the corresponding ILP instances with a tool-chain using Julia~\cite{bezansonJulia2017}, JuMP~\cite{DunningHuchetteLubin2017} and Gurobi v9.1.1~\cite{gurobi}.
The ILP solver had a runtime limit of \SI{5}{\minute} (for comparison, the max. GFH runtime was \SI{8.3}{\second}).
Due to the higher memory requirements (up to \SI{56.694}{\giga\byte}) of the ILP tool chain, we used a server-grade computer with two AMD EPYC 7401 processors (each 24 cores) and \SI{256}{\giga\byte} RAM, but both, GFH and ILP solver, were limited to using max. 16 threads as in the other evaluations.

Fig.~\ref{fig:eval_compare:delta} plots for each planning round the relative difference in the number of rejected new flows \(\Delta \text{rejects} = \frac{\text{rejects GFH}-\text{rejects ILP} }{\text{new flows (total)}}\) over the number of candidate configurations in the conflict graph.
If the ILP solver rejected fewer flows, i.e., computed a better result we have \(\Delta \text{rejects}>0\), and vice-versa.
In Fig.~\ref{fig:eval_compare:delta}, the ILP solver could provide better solutions for small conflict graphs, and, except for a few outliers where GFH reverted to the result from the first phase, the advantage of the ILP over GFH is small.
Yet, once conflict graphs contain 30,000 or more candidate configurations, the ILP solver frequently hits the time-limit (for 0\% pinned flows: GFH avg=\SI{4.9}{\second}\(\pm\)\SI{1.8}{\second}; ILP avg=\SI{290.7}{\second}\(\pm\)\SI{71.4}{\second}) and would reject most new flows.
This means even if we factor in the performance benefits of the conflict graph modeling itself (cf.~\cite{falkTimeTriggeredTrafficPlanning2020}), the GFH algorithm pushes the boundaries for offensive traffic planning further out compared to state-of-the-art exact approaches (in this case, integer linear programming) which are limited to either small scenarios or a small fraction of reconfigurable flows (cf. pinning 80\% of all flows).

For reference, we also depict the average runtimes of the planner using the GFH for each round of the scenarios from Fig.~\ref{fig:eval_compare:delta} for the different ratios of permanently pinned flows (error-bars indicating the variance of the total runtime per round).
Figure~\ref{fig:eval_compare:runtimes} highlights the flexibility of our approach: if we \enquote{re-interpret} flow pinning as a tuning parameter, e.g., used for probabilistic pinning of new flows by the planner, we can cover the full range between offensive planning with high schedulability and extremely fast defensive planning (cf. Fig.~\ref{fig:eval_compare:runtimes}: max. \SI{250}{\milli\second} for processing rounds with \(\approx\)250 active flows on the server-grade machine).

\begin{figure*}[!htpb]
    \centering
    \begin{subfigure}[t]{0.23\linewidth}
        \centering
        \includegraphics[
        trim={.0cm .0cm .0cm .0cm},
        height=1.37in
        ]{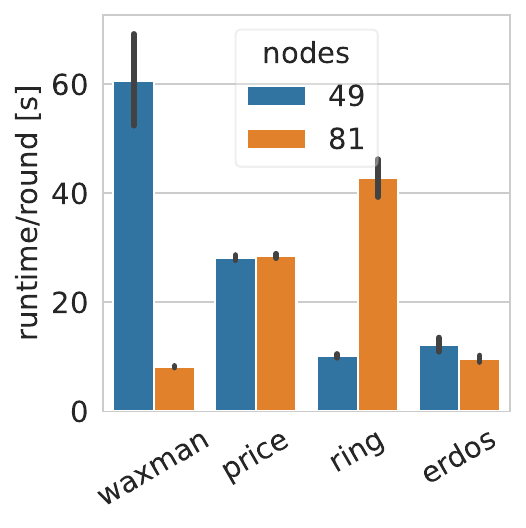}
        \caption{Runtime per round.}
        \label{fig:eval_topo:gfh}
    \end{subfigure}
    \hspace{1em}
    \begin{subfigure}[t]{0.34\linewidth}
        \centering
        \includegraphics[
        trim={.0cm .2cm .0cm .2cm},
        height=1.37in
        ]{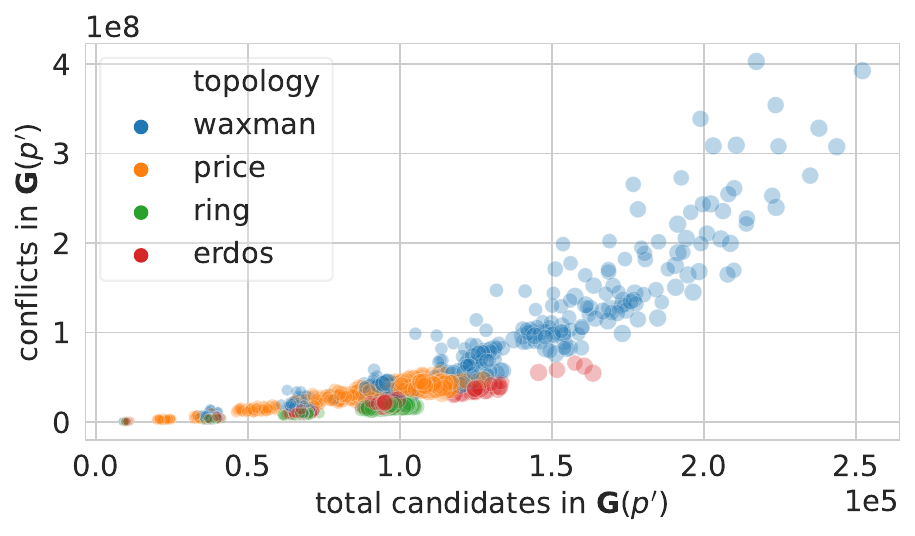}
        \caption{Conflict graph (49 nodes).}
        \label{fig:eval_topo:size49}	
    \end{subfigure}
    \hspace{1em}
    \begin{subfigure}[t]{0.34\linewidth}
        \centering
        \includegraphics[
        trim={.0cm .2cm .0cm .2cm},
        height=1.37in
        ]{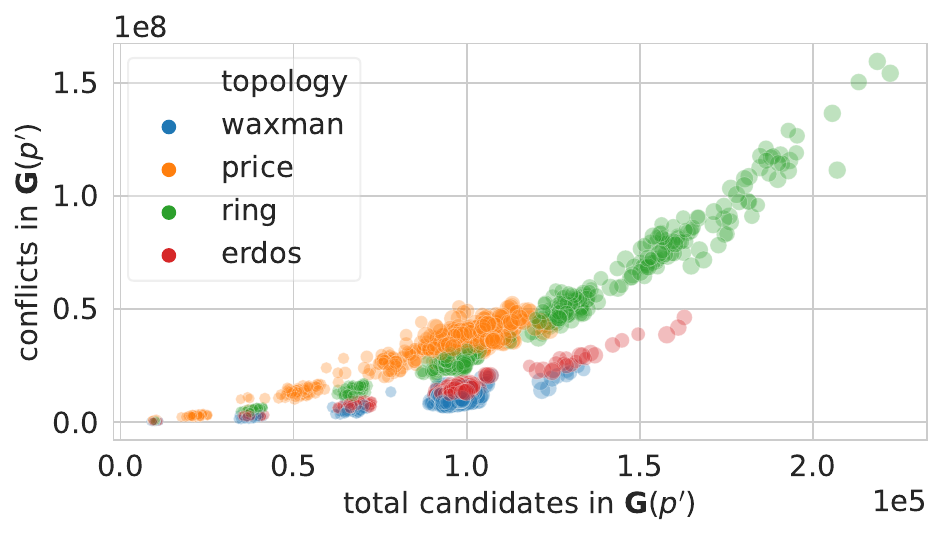}
        \caption{Conflict graph (81 nodes).}
        \label{fig:eval_topo:size81}	
    \end{subfigure}
    \caption{Impact of different network topologies on total runtime and conflict graph size.}
    \label{fig:eval_topologies}
\end{figure*}
\begin{figure*}[!htpb]
    \centering
    \begin{subfigure}[t]{0.23\linewidth}
        \centering
        \includegraphics[
        trim={.2cm .3cm .2cm .3cm},
        height=1in
        ]{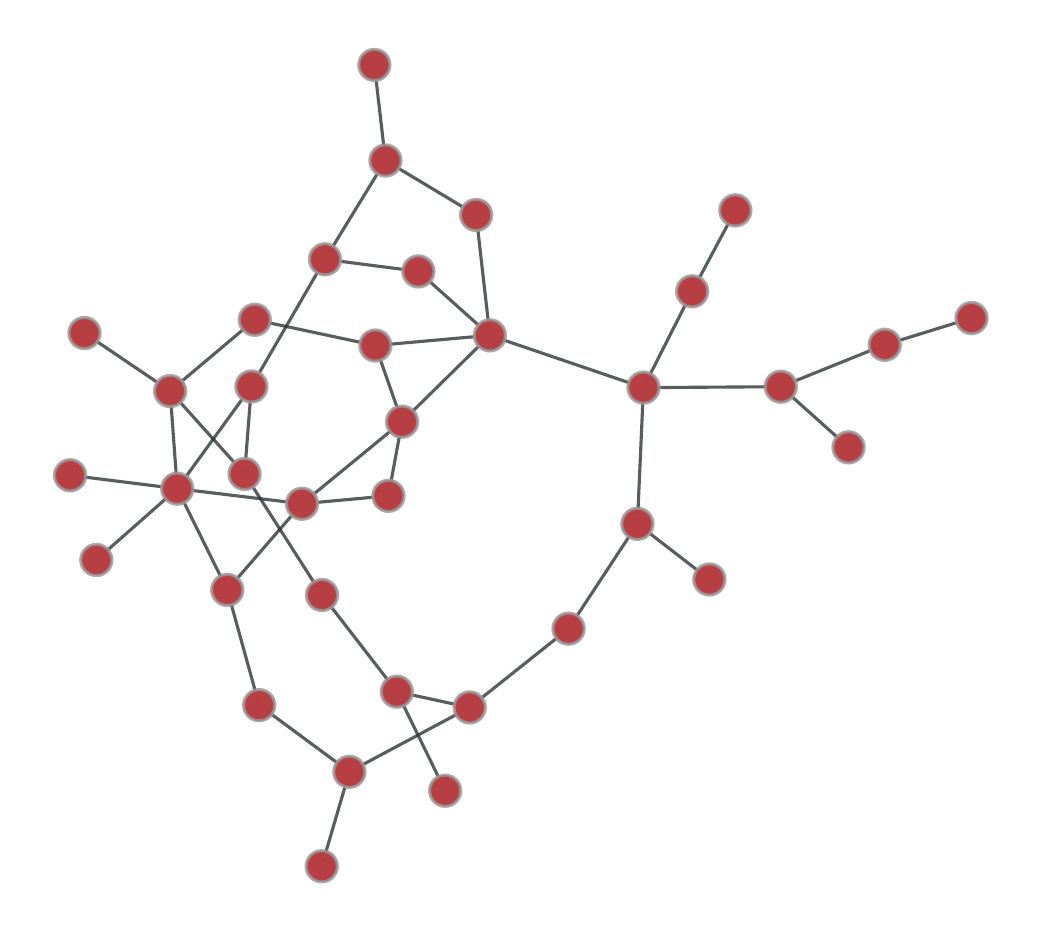}
        \caption{Waxman.}
        \label{fig:example_topologies:waxman}	
    \end{subfigure}
    \begin{subfigure}[t]{0.23\linewidth}
        \centering
        \includegraphics[
        trim={.2cm .3cm .2cm .3cm},
        height=1in
        ]{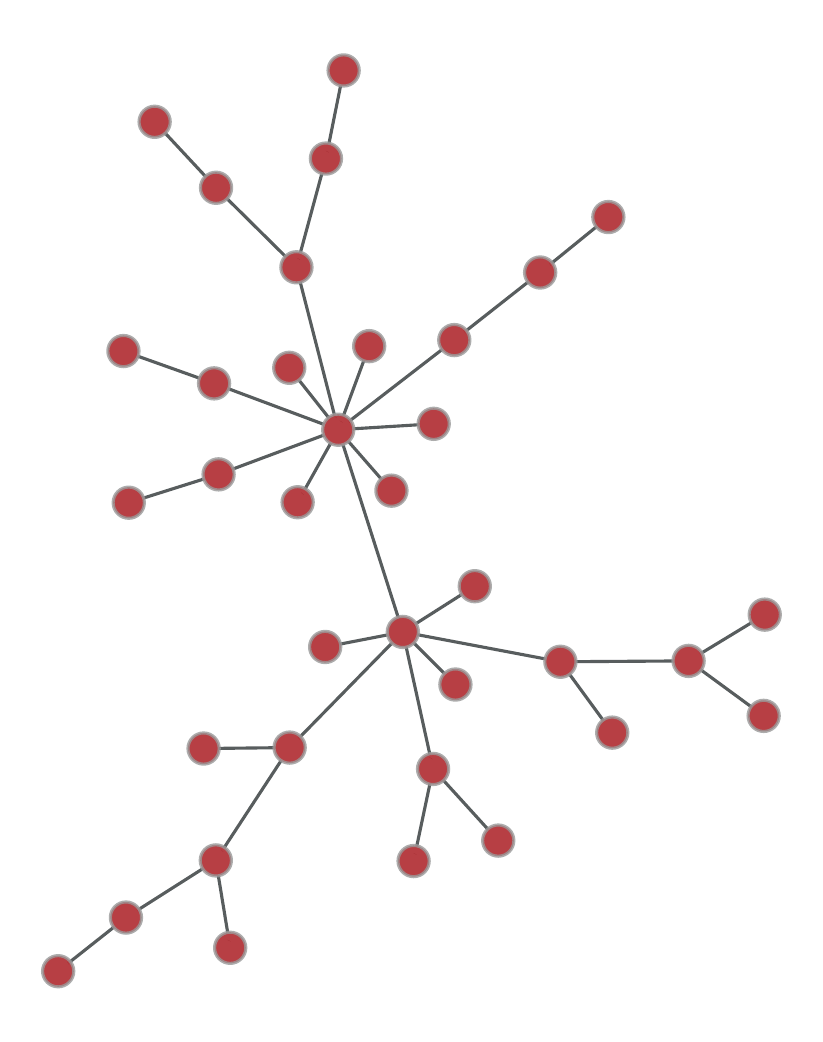}
        \caption{Price.}
        \label{fig:example_topologies:price}	
    \end{subfigure}
    \begin{subfigure}[t]{0.23\linewidth}
        \centering
        \includegraphics[
        trim={.2cm .3cm .2cm .3cm},
        height=1in
        ]{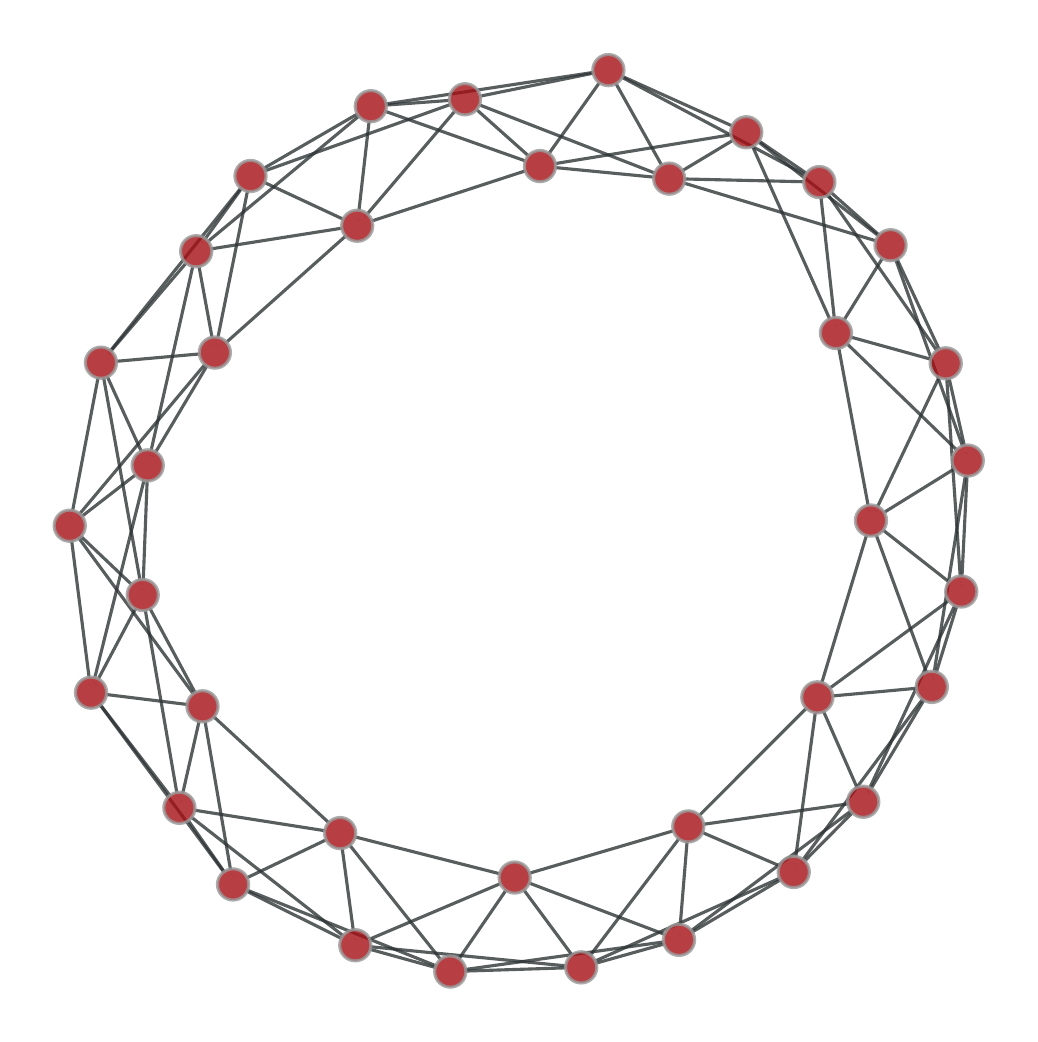}
        \caption{Ring(36,3).}
        \label{fig:example_topologies:ring}	
    \end{subfigure}
    \begin{subfigure}[t]{0.23\linewidth}
        \centering
        \includegraphics[
        trim={.2cm .3cm .2cm .3cm},
        height=1in
        ]{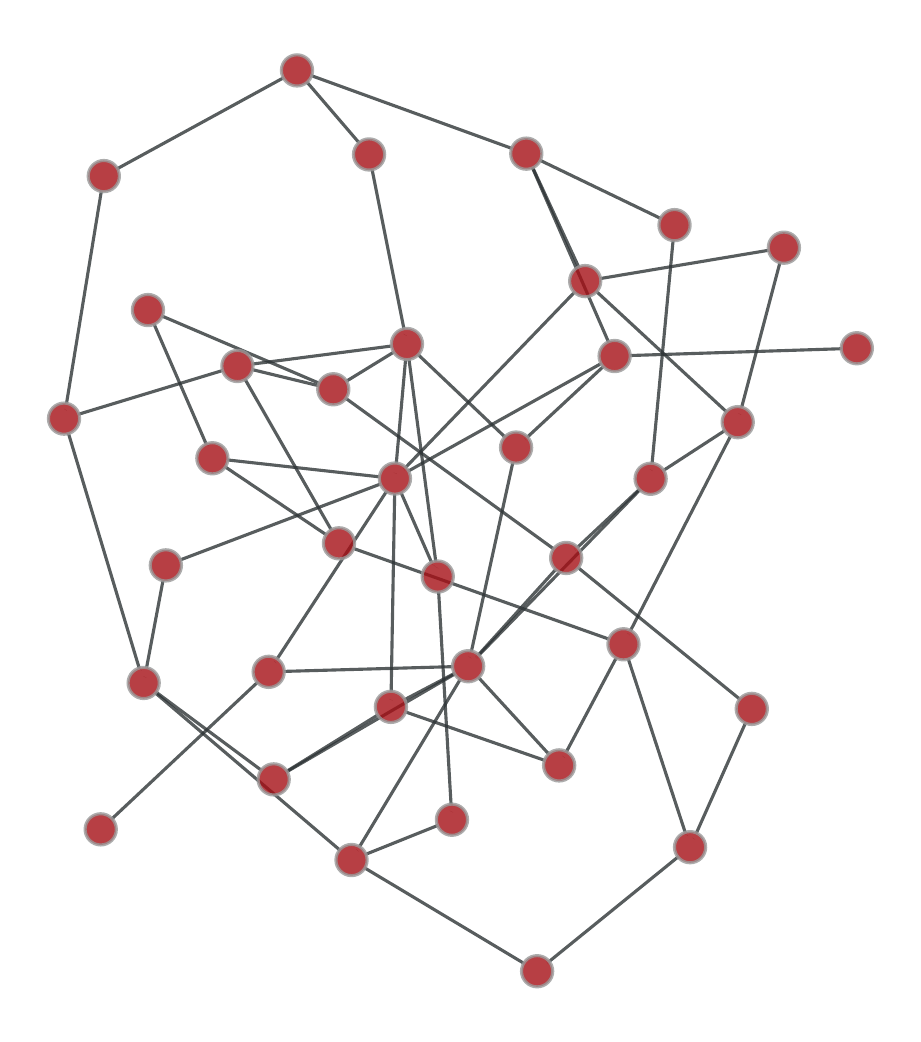}
        \caption{Erd\H{o}s-R\'enyi.}
        \label{fig:example_topologies:erdos}
    \end{subfigure}
    \caption{Example graph topologies for the different graph models with 36 nodes each.}
    \label{fig:example_topologies}
\end{figure*}

\subsubsection{Comparison with randomized heuristic}
\markChanges{
We also compare the GFH against another heuristic.
To the best of our knowledge, the \enquote{closest} competitor is the adapted version of Luby's maximal independent vertex set algorithm~\cite{luby1986simple}, which has been presented in our prior work \cite{falkTimeTriggeredTrafficPlanning2020}.
The adapted Luby's algorithm (aLA) is a randomized, iterative heuristic for the computation of \(\mathcal{C}\).
In each iteration of the aLA, eligible configurations are probabilistically added to \(\mathcal{C}\).
How likely it is for an eligible configuration \(c\) to be added to \(\mathcal{C}\) depends on the number of conflicts of \(c\) (less conflicts increases the probability), and the number of configurations already in \(\mathcal{C}\) that belong to the same flow \(f\) as \(c\) are already included in \(\mathcal{C}\) (\(f\) not admitted increases the probability).
The latter property adds some level of flow-awareness which the original Luby's algorithm lacks.
We additionally modify the aLA heuristic to account for the different importance of flows in ActiveF(\(p\)) and ReqF(\(p\)) (cf. Eq.~\ref{eq:objective_weighted_max_color}):
To ensure that no configuration of ReqF(\(p\)) shadows a configuration of a flow in ActiveF(\(p\)), we---similar to the GFH---first search configurations for active flows in the subgraph of \(\mathbf{G}(p')\) induced by all the candidates of ActiveF(\(p\)), i.e., the aLA \enquote{sees} only configurations and conflicts for flows in ActiveF(\(p\)), before we search for configurations for ReqF(\(p\)) in the \enquote{full} \(\mathbf{G}(p')\).
In other words, in the second phase of the heuristic the aLA \enquote{sees} configurations for all flows but can only select eligible configurations for flows in ReqF(\(p\)) that do not conflict with the configurations found during the first phase.
}

\markChanges{
We implemented the aLA heuristic in Julia and evaluated it with same methodology introduced for the comparison of GFH and ILP.
That is, we save the conflict graph instances to disk and use them as input for the aLA heuristic to compute \(p'\) and then compare the quality of the aLA heuristic and GFH in terms of admitted flows.
The results for the relative number of rejected flows \(\Delta \text{rejects} = \frac{\text{rejects GFH}-\text{rejects aLA} }{\text{new flows (total)}}\) are plotted in Fig.~\ref{fig:eval_compare:delta_ala}.
We see that the aLA heuristic never outperforms the GFH, i.e., different to the ILP, the aLA heuristic was never able to admit more flows than the GFH.
In absolute numbers, the aLA heuristic on average rejects 10.3 \emph{more} than the GFH (rejected flows: GFH avg=\(3.7\pm4.1\); aLA avg=\(13.9\pm9.4\))---given a total of 25 new flows in ReqF(\(p\)) .
While for ties for smaller \(\mathbf{G}(p')\), we predominantly observed that both, GFH and the aLA heuristic, admit similarly many (\(\approx 20\)) new flows, we observed that ties for larger \(\mathbf{G}(p')\) predominantly occur if both heuristics cannot admit any new flow, i.e., they have similarly \enquote{bad} results.
Remarkable is also the steep drop-off in the relative performance of the aLA once \(\mathbf{G}(p')\) contains several thousand candidate configurations (\(\Delta\)rejects: <5000 candidates avg=-0.10; \(\geq\)5000 candidates avg=-0.85).
}
\markChanges{
Considering that our aLA heuristic was implemented as a single-threaded Julia implementation, it is not surpising that it resulted in higher runtimes compared to the GFH (for 0\% pinned flows: GFH avg=\SI{4.9}{\second}\(\pm\)\SI{1.6}{\second}; aLA heuristic avg=\SI{29.2}{\second}\(\pm\)\SI{9.3}{\second}).
However, while it is reasonable to expect that---being much simpler---a multi-threaded aLA heuristic will match or even undercut the GFH runtime, this is payed for with much worse quality, especially for larger conflict graphs.
}
\subsubsection{Summary}
\markChanges{
Firstly, we showed in our evaluations that reconfiguring active flows during a traffic plan update reduces the number of rejected new flows.
Secondly, our evaluations highlight that the configuration selection strategy of the GFH is worth its complexity.
The GFH mostly stayed within close distance to the quality of the exact ILP solutions for smaller conflict graphs, and the GFH continued to provide good solutions within seconds even for those \enquote{larger} conflict graphs where the ILP ran into the runtime-limit.
On the other hand, in our evaluations, the simpler aLA heuristic at best managed to tie with the GFH, and on average the results of the aLA heuristic could admit only half the number of flows admitted by the GFH.
}

\subsection{Network}
\label{sec:eval:topology}
To investigate the effects of network topology and network size, we generated scenarios for networks with 49 nodes and 81 nodes for different graph models, namely, Waxman, Price, ring(\(n,k\)), and Erd\H{o}s-R\'enyi, using graph generators from~\cite{SciPyProceedings_11,peixoto_graph-tool_2014}.
Examples for the different network topologies are provided in Fig.~\ref{fig:example_topologies}. 
These networks cover a variety of different network characteristics, like diameter, number of alternative paths and scale-free property.
Thus, we can evaluate the fitness of our approach for different network types.
We evaluated 60 scenarios for each size-topology combination with 500 active flows on average.
After the initialization, the planner tries adding 50 new flows and removing 50 active flows in each planning round.

The average runtimes after initialization and warm-up rounds are plotted in Fig.~\ref{fig:eval_topo:gfh}.
We observe that a larger network does not per-se result in higher runtimes or bigger conflict graphs, cf. Fig~\ref{fig:eval_topo:size49}, \ref{fig:eval_topo:size81}.
Waxman networks with 49 nodes result in the largest conflict graph sizes overall and highest average (\SI{60.5}{\second}) and maximal (\SI{283.72}{\second}) total runtime per round.
In Price networks with only one path between every two nodes, the planner cannot resolve conflicts via routing.
Compared to the other topologies, on average more than ten times as many new flows were rejected per round (49n. avg=10.4\(\pm\)5.12 rej./round; 81n. avg=11.6\(\pm\)5.19 rej./round) in scenarios with Price network topologies.
In scenarios with Price networks, GFH often required all re-runs, and we measured the third-highest total runtimes.
This suggests that our approach scales with the actual difficulty rather than just the network size.

\section{Conclusion and Outlook}
\label{sec:conclusion_outlook}
In this paper, we presented a novel approach for offensive dynamic traffic planning, which allows for the reconfiguration of active flows to achieve better network utilization, and a novel algorithm for computing these traffic plans.
We are able to quantify and control the QoS degradation a flow may suffer during such reconfigurations.
Our evaluations show that we can efficiently compute updated traffic plans for scenarios with hundreds of active flows.

Interesting directions for future work include investigating memory-access optimized conflict graph data-structures and the relaxation of the zero-queuing constraint, as well as protocols and concepts for error-handling while computing the traffic plan and during traffic plan deployment.

\ifCLASSOPTIONcaptionsoff
  \newpage
\fi



%

\bibliographystyle{IEEEtran}
\bibliography{IEEEabrv,references}

%


\begin{IEEEbiography}[{\includegraphics[width=1in,height=1.25in,clip,keepaspectratio]{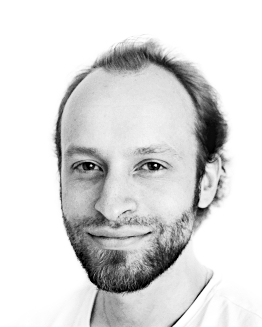}}]{Jonathan Falk}
	joined the Distributed Systems group of the Institute for Parallel and Distributed Systems of the University of Stuttgart after receiving his Master’s degree in Electrical Engineering and Information Technology. His research interests include time-sensitive communication, time-triggered networks, and cyber-physical systems.
\end{IEEEbiography}

\begin{IEEEbiography}[{\includegraphics[width=1in,height=1.25in,clip,keepaspectratio]{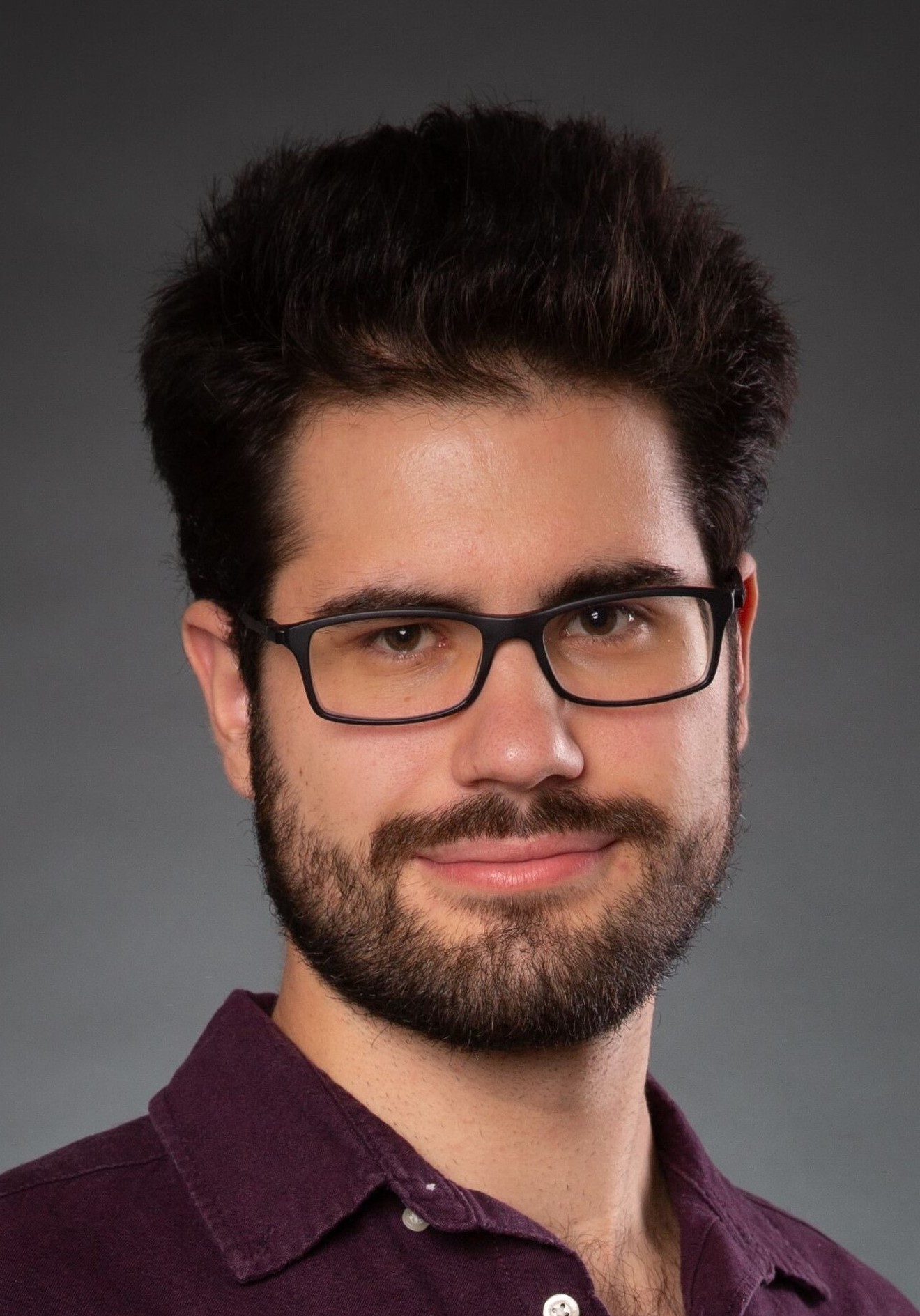}}]{Heiko Geppert}
	received his Master degree in Software Engineering from the University of Stuttgart, Germany, in 2019.
	He is currently working on his PhD at the Distributed Systems group of the University of Stuttgart.
	His research interests include distributed graph processing and network scheduling, with a focus on fast and scalable solutions.
\end{IEEEbiography}

\begin{IEEEbiography}[{\includegraphics[width=1in,height=1.25in,clip,keepaspectratio, trim={0.225cm 0 0.225cm 0}, clip]{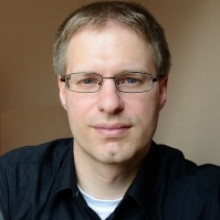}}]{Frank Dürr}
	is a senior researcher and lecturer at the Distributed Systems Department of the Institute of Parallel and Distributed Systems (IPVS) at University of Stuttgart, Germany. He received both, his doctoral degree and diploma in computer science from University of Stuttgart. Frank Dürr is currently leading the mobile computing and the software-defined networking (SDN) \& time-sensitive networking (TSN) groups of the Distributed Systems Department. Besides several scientific papers in the area of SDN and TSN, including for instance concepts for scheduling and routing in TSN/SDN networks, he has given tutorials on SDN at several conferences. Besides SDN and TSN, his research interests include mobile and pervasive computing, location privacy, and cloud computing aspects overlapping with these topics like mobile cloud and edge computing, or data center networks.
\end{IEEEbiography}

\vfill

\begin{IEEEbiography}[{\includegraphics[width=1in,height=1.25in,clip,keepaspectratio]{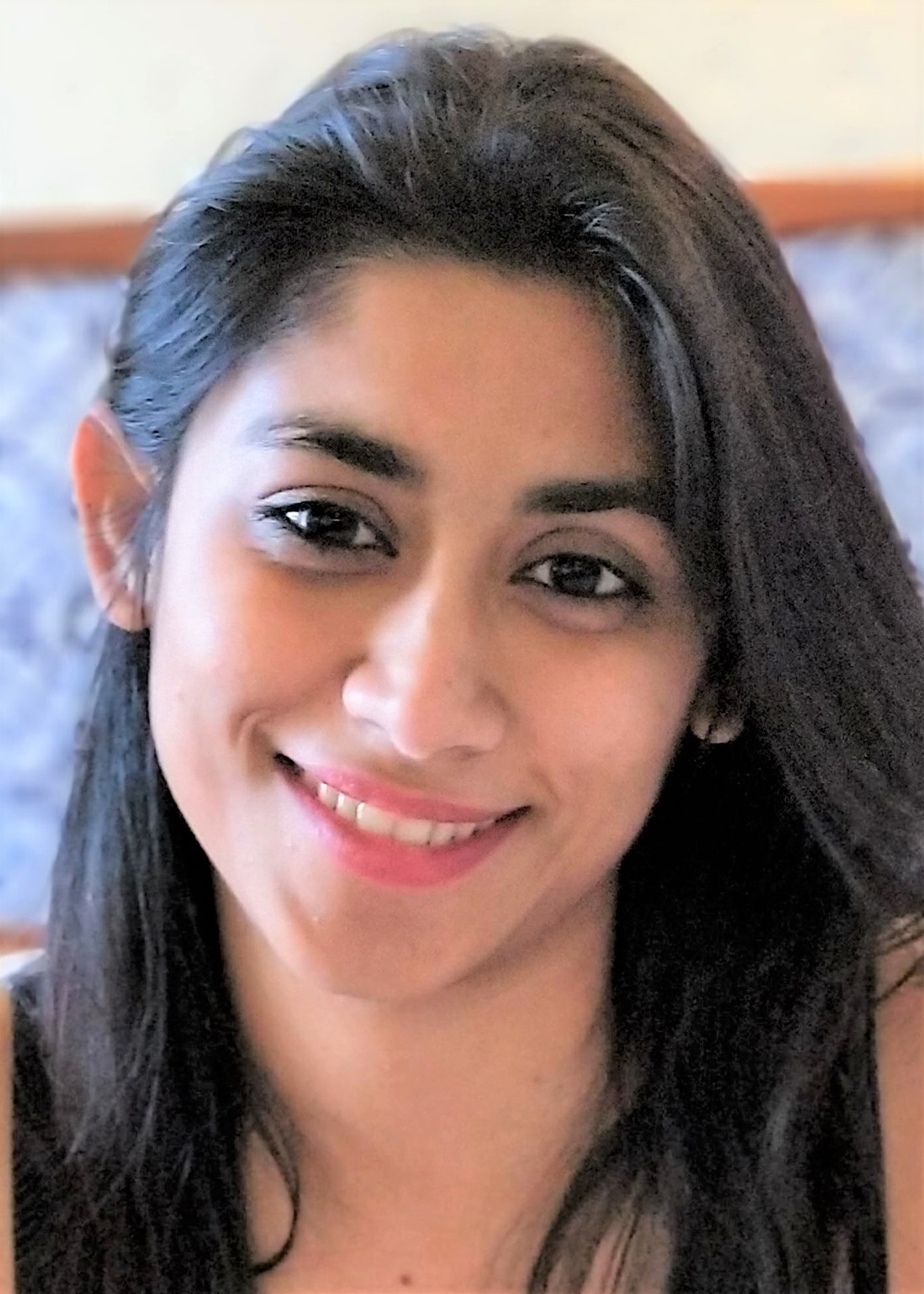}}]{Sukanya Bhowmik}
	received her doctoral degree from University of Stuttgart, Germany, in 2017. She is currently working as a postdoctoral researcher at the Distributed Systems research group of University of Stuttgart. Her research interests include stream/complex event processing, high performance communication middleware, in-network event processing, software-defined networking, and distributed graph processing,  with a focus on scalability, line-rate performance, resource efficiency, and adaptability aspects.
\end{IEEEbiography}

\begin{IEEEbiography}[{\includegraphics[width=1in,height=1.25in,clip,keepaspectratio]{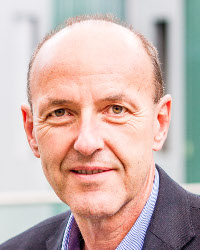}}]{Kurt Rothermel}
	received his doctoral degree in Computer Science from University of Stuttgart in 1985. From 1986 to 1987 he was a Post-Doctoral Fellow at IBM Almaden Research Center in San Jos\'e, U.S.A., and then joined IBM's European Networking Center in Heidelberg. In 1990,  he became a Professor for Computer Science at the University of Stuttgart. From 2003 to 2011 he was head of the Collaborative Research Center Nexus (SFB 627), conducting research in the area of mobile context-aware systems. His research interests are in the field of distributed systems, computer networks, and mobile systems.computer networks, and mobile systems.             
\end{IEEEbiography}

\vfill


\end{document}